\begin{document}
\newtheorem{lemma}{Lemma}
\newtheorem{corol}{Corollary}
\newtheorem{theorem}{Theorem}
\newtheorem{proposition}{Proposition}
\newtheorem{problem}{Problem}
\newtheorem{definition}{Definition}
\newcommand{\e}{\begin{equation}}
\newcommand{\ee}{\end{equation}}
\newcommand{\eqn}{\begin{eqnarray}}
\newcommand{\eeqn}{\end{eqnarray}}
\title{ Spatially Common Sparsity Based Adaptive Channel Estimation and Feedback for FDD Massive MIMO}
\author{Zhen Gao,~\IEEEmembership{Student Member,~IEEE}, Linglong Dai,~\IEEEmembership{Senior Member,~IEEE},\\
 Zhaocheng Wang,~\IEEEmembership{Senior Member,~IEEE},
 Sheng Chen,~\IEEEmembership{Fellow,~IEEE}%
\thanks{Z. Gao, L. Dai and Z. Wang (E-mails: gaozhen010375@foxmail.com;
 daill@tsinghua.edu.cn; zcwang@tsinghua.edu.cn) are with Department of Electronic
 Engineering, Tsinghua University, Beijing 100084, China.} %
\thanks{S. Chen (E-mail: sqc@ecs.soton.ac.uk) is with Electronics and Computer Science,
 University of Southampton, Southampton SO17 1BJ, UK, and also with King
 Abdulaziz University, Jeddah 21589, Saudi Arabia.} %
  %
}

\maketitle

\begin{abstract}
 This paper proposes a spatially common sparsity based adaptive channel estimation
 and feedback scheme for frequency division duplex based massive multi-input multi-output
 (MIMO) systems, which adapts training overhead and pilot design to reliably estimate and
 feed back the downlink channel state information (CSI) with significantly reduced overhead.
 Specifically, a non-orthogonal downlink pilot design is first proposed, which is very
 different from standard orthogonal pilots. By exploiting the spatially common sparsity
 of massive MIMO channels, a compressive sensing (CS) based adaptive CSI acquisition
 scheme is proposed, where the consumed time slot overhead only adaptively depends on the
 sparsity level of the channels. Additionally, a distributed sparsity adaptive matching
 pursuit algorithm is proposed to jointly estimate the channels of multiple subcarriers.
 Furthermore, by exploiting the temporal channel correlation, a closed-loop channel
 tracking scheme is provided, which adaptively designs the non-orthogonal pilot according
 to the previous channel estimation to achieve an enhanced CSI acquisition. Finally,
 we generalize the results of the multiple-measurement-vectors case in CS and derive the
 Cramer-Rao lower bound of the proposed scheme, which enlightens us to design the
 non-orthogonal pilot signals for the improved performance. Simulation results demonstrate
 that the proposed scheme outperforms its counterparts, and it is capable of approaching
 the performance bound.
\end{abstract}

\begin{IEEEkeywords}
 Massive multi-input multi-output, frequency division duplex, compressive sensing, %
 channel estimation, feedback, spatially common sparsity, temporal correlation
\end{IEEEkeywords}

\IEEEpeerreviewmaketitle

\section{Introduction}\label{S1}

 By exploiting the increased degree of freedom in the spatial domain, massive
 multi-input multi-output (MIMO) can enhance the spectrum efficiency and energy
 efficiency by orders of magnitude \cite{MMIMOover,Scaleup}. To harvest the benefits
 of massive MIMO, the base station (BS) needs the accurate channel state information
 (CSI) in the downlink for beamforming, resource allocation, and other operations.
 However, it is challenging for the BS to acquire the accurate downlink CSI in
 frequency division duplex (FDD) based massive MIMO, since the overhead used for the
 downlink channel estimation and feedback can be prohibitively high. Most of the
 researches sidestep this challenge by assuming the time division duplex (TDD)
 protocol. In TDD based massive MIMO, the CSI in the uplink can be more easily
 acquired at the BS due to the limited number of users, and the channel reciprocity
 property can be exploited to realize the downlink channel estimation using the
 uplink channel estimation \cite{{MMIMOover},{Scaleup},{TDD},{Haifan}}. However, in
 TDD massive MIMO, the CSI acquired in the uplink may not be always accurate for the
 downlink due to the calibration error of radio frequency chains \cite{FDDocloop}.
 In addition, frequency division duplex (FDD) protocol still dominates current
 wireless networks, where the downlink channel estimation is necessary, since the
 channel reciprocity does not hold. Thus, it is of great importance to explore an
 efficient approach to enable massive MIMO to be backward compatible with current
 wireless networks \cite{MIMOOFDM}. In this paper, we focus on the reliable and
 efficient channel estimation and feedback for FDD massive MIMO.

 Channel estimations in small-scale MIMO are usually based on orthogonal pilots
 \cite{{Orthp},{Orthp2},{LTE},{orthogonal}}. In Long Term Evolution-Advanced (LTE-A),
 for example, pilots associated with different BS antennas occupy the different
 frequency-domain subcarriers \cite{LTE}. Pilot signals can be also orthogonal in the
 time or code domain. However, the overhead of orthogonal pilots increases with the
 number of BS antennas, which becomes unaffordable for massive MIMO. For FDD massive
 MIMO, the work \cite{Love} proposed a pilot design for the downlink channel estimation
 by exploiting the channel statistics. However, the acquisition of the downlink channel
 covariance matrix is challenging in practice.
 An open-loop and closed-loop training based channel estimation scheme was proposed in
 \cite{FDDocloop}. Nevertheless, the long-term channel statistics required by the user
 may increase the training time and memory cost. In \cite{MIMOOFDM}, a sparse channel
 estimation scheme was proposed to acquire CSI with significantly reduced pilot overhead
 by exploiting the sparsity of time-domain channel impulse response (CIR). But this
 time-domain sparsity of the channels may not exist when the number of scatterers at
 the user side becomes large. Furthermore, \cite{{Love},{FDDocloop},{MIMOOFDM}} do not
 consider the channel feedback to the BS. In order to obtain the fine-grain spatial
 channel structures, the conventional codebook based CSI feedback schemes may become
 impossible, since the dimension of codebook can be huge in massive MIMO. Hence the design,
 storage, and encoding of the high-dimensional codebook can be difficult \cite{{multi_VTC}}.
 The compressive sensing (CS) based channel feedback schemes for massive MIMO were proposed
 to reduce the feedback overhead by exploiting the spatial correlation of CSI
 \cite{{multi_VTC},{WCNC12}}. However, these schemes do not consider downlink channel
 estimation. By exploiting the spatially joint sparsity of multiple users' channel matrices,
 the works \cite{{Rao1},{Rao2}} proposed a joint orthogonal matching pursuit (OMP) based
 CSI acquisition scheme. However, this scheme cannot adaptively adjust the required overhead
 according to the sparsity level of the channels. Moreover, the spatially joint sparsity
 may disappear when the users are not spatially close. Even for the best case that multiple
 users' channel matrices share the spatially common sparsity, the sparse CSI acquisition
 problem is a multiple-measurement-vectors (MMV) problem, where the reduction in required
 overhead is limited.

 Recent study and experiments have shown that the wireless channels between the BS and users
 exhibit a small angle spread seen from the BS~\cite{{Haifan},{JSDM},{esprit}}. Due to the
 small angle spread and large dimension of the channels, massive MIMO channels exhibit the
 sparsity in the virtual angular domain~\cite{Virtual_rep}. Moreover, since the spatial
 propagation characteristics of the wireless channels within the system bandwidth are nearly
 unchanged, such sparsity is shared by subchannels of different subcarriers when the widely
 used orthogonal frequency-division multiplexing (OFDM) is considered. This phenomenon is
 referred to as the \emph{spatially common sparsity within the system bandwidth} \cite{TSE}.
 Besides, due to the temporal correlation of the channels \cite{TSE}, massive MIMO channels
 are quasi-static in several adjacent time slots or one time block consisting of multiple
 time slots. Moreover, the support set of the sparse channels in the virtual angular domain
 is almost unchanged in multiple time blocks, which is referred to as the \emph{spatially
 common sparsity during multiple time blocks}.

 By exploiting the spatially common sparsity and the temporal correlation of massive MIMO
 channels, this paper proposes an adaptive channel estimation and feedback scheme with low
 overhead. The proposed scheme consists of two stages: a CS based adaptive CSI acquisition
 with the adaptive training overhead and a follow-up closed-loop channel tracking with the
 adaptive pilot design. Specifically, the BS transmits the proposed non-orthogonal pilot.
 The users simply feed back the received non-orthogonal pilot signals to the BS. According
 to the feedback signals, the CS based adaptive CSI acquisition scheme acquires the downlink
 CSI at the BS with the adaptive training time slot overhead. For this stage, a distributed
 sparsity adaptive matching pursuit (DSAMP) algorithm is proposed to acquire the CSI, whereby
 the spatially common sparsity of massive MIMO channels within the system bandwidth is
 exploited. By exploiting the spatially common sparsity of massive MIMO channels during
 multiple time blocks, the closed-loop channel tracking scheme is proposed to track the
 channels in the second stage. For this stage, the BS can adaptively adjust the pilot signals
 according to the previous acquired CSI, and a simple least squares (LS) algorithm is
 used to estimate the channels with improved performance. Additionally, we generalize
 the results for the conventional MMV to the generalized MMV (GMMV) and provide the Cramer-Rao
 lower bound (CRLB) of the proposed scheme, which enlightens us to design the non-orthogonal
 pilot signals. Simulation results verify that the proposed scheme is superior to its
 counterparts, and it is capable of approaching the performance bound. We now summarize our
 novel contributions.
\begin{itemize}
\item \textbf{CS based adaptive CSI acquisition scheme}: By exploiting the spatially
 common sparsity of massive MIMO channels within the system bandwidth, this scheme
 substantially reduces the required time slot overhead for channel estimation and feedback,
 where the required time slot overhead is only dependent on the channel sparsity level,
 rather than the number of BS antennas as in conventional CSI acquisition schemes.
\item \textbf{Closed-loop channel tracking scheme}: By leveraging the spatially common
 sparsity of massive MIMO channels during multiple time blocks, this scheme can further
 reduce the required time slot overhead.
\item \textbf{Non-orthogonal downlink pilot design at BS}: i)~We theoretically prove
 that the GMMV outperforms the conventional MMV on the sparse signal recovery performance.
 This enlightens us to design the non-orthogonal pilot for CS based adaptive CSI acquisition
 for improving channel estimation performance. ii)~We derive the CRLB for the proposed
 scheme. In the stage of closed-loop channel tracking, the derived CRLB enlightens us to
 adaptively design the non-orthogonal pilot according to the previous channel estimation
 for further improving performance.
\item \textbf{DSAMP algorithm}: This algorithm leverages the spatially common sparsity of
 massive MIMO channels to jointly estimate multiple channels associated with different
 subcarriers. Compared with the conventional algorithms, such as sparsity adaptive
 matching pursuit (SAMP), subspace pursuit (SP) and OMP, the proposed DSAMP substantially
 reduces the required time slot overhead with similar computational complexity.
\end{itemize}

 Throughout our discussions, scalar variables are denoted by normal-face letters, while
 boldface lower and upper-case symbols denote column vectors and matrices, respectively,
 and $\textsf{j}=\sqrt{-1}$ is the imaginary axis. The Moore-Penrose inversion, transpose
 and conjugate transpose operators are given by $(\cdot )^{\dag}$, $(\cdot )^{\rm T}$ and
 $(\cdot )^{*}$, respectively, while $\lceil \cdot \rceil$ is the integer ceiling operator
 and $(\cdot )^{-1}$ is the inverse operator. The $\ell_{0}$-norm and $\ell_{2}$-norm are
 given by $\|\cdot\|_0$ and $\|\cdot\|_2$, respectively, and $\left| \Gamma \right|$ is the
 cardinality of the set $\Gamma$. The support set of the vector $\mathbf{a}$ is denoted by
 ${\rm supp}\{\mathbf{a}\}$, $\left[ {\bf a} \right]_i$ denotes the $i$th entry of the
 vector $\mathbf{x}$, and $\left[ {\bf A} \right]_{i,j}$ denotes the $i$th-row and
 $j$th-column element of the matrix $\mathbf{A}$, while ${\bf I}_K$ is the $K\times K$
 identity matrix. The rank of ${\bf A}$ is denoted by ${\rm rank}\{{\bf A}\}$ and
 ${\rm Tr}\{\cdot \}$ is the matrix trace operator, while ${\rm E}\{\cdot \}$ is the
 expectation operator and ${\rm var}\{\cdot \}$ is the variance of a random variable.
 Finally, $\left( {\bf a} \right)_{\Gamma}$ denotes the entries of $\mathbf{a}$ whose
 indices are defined by $\Gamma$, while $\left( {\bf A}\right)_{\Gamma}$ denotes a
 sub-matrix of $\mathbf{A}$ with column indices defined by $\Gamma$.

\section{System Model}\label{S2}

\subsection{Massive MIMO in the Downlink}\label{S2.1}

 In a typical massive MIMO system, the BS employing $M$ antennas simultaneously serves $K$
 single-antenna users \cite{Scaleup}, where $M \gg K$. For the subchannel at the $n$th
 subcarrier, where $1\le n\le N$ and $N$ is the size of the OFDM symbol, the received signal
 $y_{k,n}$ of the $k$th user can be expressed as
\begin{equation}\label{equ:H} 
 y_{k,n} = {\bf h}_{k,n}^{\rm T}{\bf x}_n + w_{k,n} ,
\end{equation}
 where ${\bf h}_{k,n}\in \mathbb{C}^{M\times 1}$ denotes the downlink channel between
 the $k$th user and the $M$ antennas at the BS, ${\bf x}_n \in \mathbb{C}^{M \times 1}$
 is the transmitted signal after precoding, and $w_{k,n}$ is the associated additive
 white Gaussian noise (AWGN). The received signal of the $K$ users
 ${\bf y}_n=\big[y_{1,n} ~ y_{2,n} \cdots y_{K,n}\big]^{\rm T}\! \in \! \mathbb{C}^{K \times 1}$
 can be collected together as
\begin{equation}\label{equ:downlink} 
 {\bf y}_n = {\bf H}_n{\bf x}_n + {\bf w}_n ,
\end{equation}
 in which ${\bf H}_n=\big[{\bf h}_{1,n} ~ {\bf h}_{2,n}\cdots {\bf h}_{K,n}\big]^{\rm T}
 \in \mathbb{C}^{K \times M}$ is the downlink channel matrix, and ${\bf w}_n=\big[
 w_{1,n} ~ w_{2,n} \cdots w_{K,n}\big]^{\rm T}\in \mathbb{C}^{K \times 1}$ is the
 corresponding AWGN vector.

\subsection{Massive MIMO Channels in Virtual Angular Domain}\label{S2.2}

 We model the channel vector ${\bf h}_{k,n}$ by using the virtual angular domain
 representation \cite{{TSE},{Virtual_rep}}
 \begin{equation}\label{equ:channelmode1} 
 y_n = {\bf h}_n^{\rm T} {\bf x}_n + w_n = \widetilde{\bf h}_n^{\rm T}{\bf A}_B^{*}
  {\bf x}_n + w_n ,
\end{equation}
 where the user index $k$ in $y_{k,n}$, ${\bf h}_{k,n}$ and $w_{k,n}$ is dropped to
 simplify the notations, while ${\bf h}_n^{\rm T} =\widetilde{\bf h}_n^{\rm T}{\bf A}_B^{*}$
 and ${\bf A}_B\in \mathbb{C}^{M\times M}$ is the unitary matrix representing the
 transformation matrix of the virtual angular domain at the BS side. ${\bf A}_B$ is
 determined by the geometrical structure of the BS's antenna array.

\begin{figure}[hp!]
\centering
\includegraphics[width=1\columnwidth, keepaspectratio]{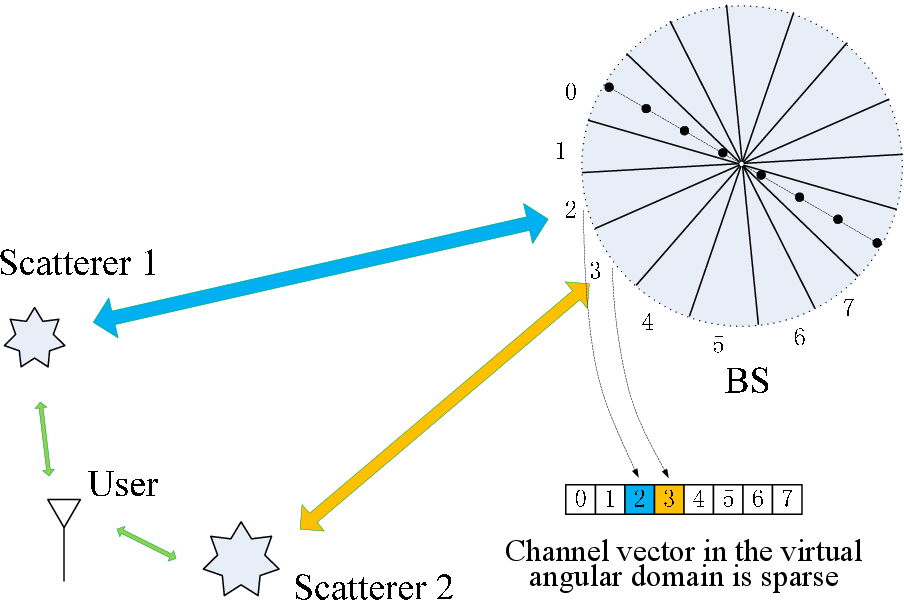}
\caption{Channel vector representation in the virtual angular domain, where the BS
 employs the ULA with half wave-length spacing, $M=8$, and two clusters of
 scatterers are considered as an example.}
\label{fig:virtual_angle} 
\end{figure}

 To intuitively explain the channel vector $\widetilde{\bf h}_n$, a simple example is
 illustrated in Fig.~\ref{fig:virtual_angle}, where the BS employs the uniform linear
 array (ULA) with the antenna spacing of $d ={\lambda}/{2}$ and $\lambda$ is the
 wave-length. In this case, ${\bf A}_B$ becomes the discrete Fourier transform (DFT)
 matrix \cite{{TSE}}. The channel vector in the virtual angular domain then simply
 means to `sample' the channel in the angular domain at equi-spaced angular intervals
 at the BS side, or equivalently to represent the channel in the virtual angular domain
 coordinates. More specifically, the $m$th element of $\widetilde{\bf h}_n$ is the
 channel gain consisting of the aggregation of all the paths, whose transmit/receive
 directions are within an angular window around the $m$th angular coordinate \cite{{TSE}}.

 As the BS is usually elevated high with few scatterers around, while users are located
 at low elevation with relatively rich local scatterers, the angle spread at the BS side
 is small \cite{{Haifan},{JSDM},{esprit}}. Since the angle spread is limited at the BS,
 a small part of the elements in $\widetilde{\bf h}_n$ contain almost all the multipath
 signals reflected, diffracted, or refracted by scatterers around the user. If we take
 the typical angular-domain spread of $10^{\circ}$ and the ULA with $M=128$ as an example
 \cite{Haifan}, the uniformly virtual angular domain sampling interval is
 $\varphi_s=180^{\circ}/M=1.406^{\circ}$ \cite{Virtual_rep}, and the vast majority of the
 channel energy is concentrated on around $8=\lceil 10^{\circ}/1.406^{\circ} \rceil$
 virtual angular domain coordinates, which is far smaller than the total dimension $M=128$
 of the channel vector. Consequently, $\widetilde{\bf h}_n$ exhibits the sparsity \cite{TSE},
 namely,
\begin{equation}\label{equ:channelmode3} 
 \left|\Theta_n\right| = \left| {\rm supp}\left\{ \widetilde{\bf h}_n \right\} \right| = S_a \ll M ,
\end{equation}
 where $\Theta_n$ is the support set, and $S_a$ is the sparsity level.

 Moreover, since the spatial propagation characteristics of the channels within the system
 bandwidth (e.g. 10 MHz in typical LTE-A systems) are almost unchanged, the subchannels
 associated with different subcarriers share very similar scatterers in the propagation
 environment~\cite{TSE}. Hence the small angle spreads of the subchannels within the system
 bandwidth are very similar. Consequently, $\left\{\widetilde{\bf h}_n\right\}_{n=1}^{N}$
 have the common sparsity, namely,
 \begin{equation}\label{eq5}
 {\rm supp}\left\{\widetilde{\bf h}_1\right\} = {\rm supp}\left\{\widetilde{\bf h}_2\right\}
 = \cdots = {\rm supp}\left\{\widetilde{\bf h}_N\right\} = \Theta ,
\end{equation}
 which is illustrated in Fig.~\ref{fig:CSI_matrix}.

\begin{figure}[hp!]
\centering
\includegraphics[width=1\columnwidth, keepaspectratio]{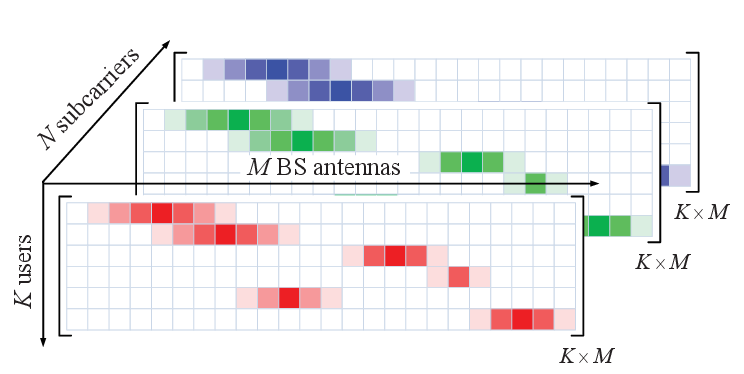}
\caption{The virtual angular-domain channel vectors within the system bandwidth exhibit
 the common sparsity.}
\label{fig:CSI_matrix} 
\end{figure}

\subsection{Temporal Correlation of Wireless Channels}\label{temporal}

\begin{figure*}[tp!]
\centering
\includegraphics[width=15cm, keepaspectratio]{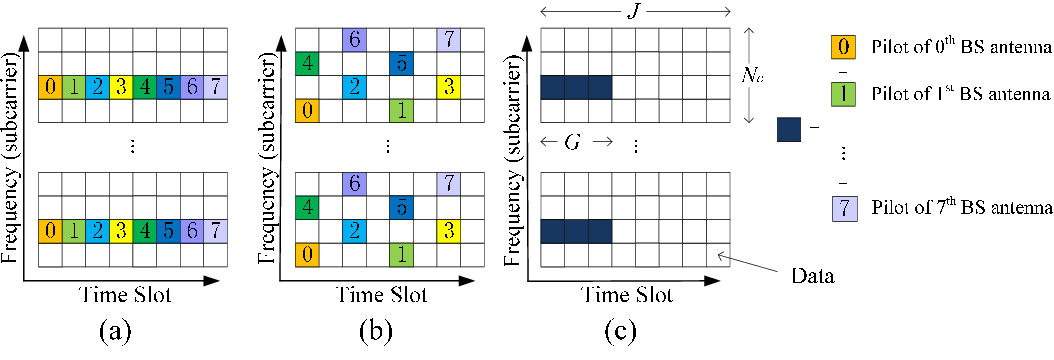}
\caption{(a)~Time-domain orthogonal pilot \cite{orthogonal}, (b)~time-frequency orthogonal
 pilot in LTE-A \cite{LTE}, and (c)~proposed non-orthogonal pilot, assuming $M=8$.}
 \label{fig:pilot} 
\end{figure*}

 Since the user mobility is not very high in massive MIMO systems, the channels remain
 static for the duration of a block that consists of $J$ consecutive time slots, while
 the channels change from block to block. Here, one time slot represents one OFDM symbol.
 This block fading implies that ${\bf h}_n^{(q,t)}={\bf h}_n^{(q)}$ for $1\le t\le J$,
 where ${\bf h}_n^{(q,t)}$ is the channel at the $t$th time slot of the $q$th block
 and ${\bf h}_n^{(q)}$ denotes the quasi-static channel in the $q$th block. Similarly,
 there exists the quasi-static relationship $\widetilde{\bf h}_n^{(q,t)}
 =\widetilde{\bf h}_n^{(q)}$ for $1\le t\le J$, with $\widetilde{\bf h}_n^{(q,t)}$ and
 $\widetilde{\bf h}_n^{(q)}$ being the virtual angular representations of ${\bf h}_n^{(q,t)}$
 and ${\bf h}_n^{(q)}$, respectively.

 For massive MIMO channels, $J < M$ due to the limited coherence time and the large number
 of BS antennas. For example, consider massive MIMO systems with: the carrier frequency
 $f_c=2$\,GHz, the system bandwidth $B_s=10$\,MHz, the OFDM size $N=2048$, the number of
 BS antennas $M=128$, and the maximum delay spread $\tau_{\rm max}=6.4\,\mu{\rm s}$ (need
 the guard interval $N_g=64$) \cite{{Scaleup},{channel_model_for4g}}. Suppose that the
 maximum mobile velocity of the supported users is $v=36$\,km/h. Then the maximum Doppler
 frequency shift is $f_d=vf_c/c=66.67$\,Hz, where $c$ is the velocity of electromagnetic
 wave. Hence the coherence time $T_c=\sqrt{9/\left( 16\pi f_d^2 \right)} \approx 6.3\,{\rm ms}$
 \cite{matlab}, or the coherence time slots $J=T_c B_s/(N+N_g)\approx 30$, which is much
 smaller than $M$.

 Since the channels change from block to block, they must be estimated in every time block,
 which may impose very high complexity and overhead. Fortunately, experiments and theoretical
 analysis have shown that although the channels vary continuously from one block to another,
 the variation rate of the channel angle spread is much lower than that of the associated
 channel gains \cite{Virtual_rep}. This implies that
\begin{equation}\label{equ:common_block} 
{\rm supp}\!\left\{ \widetilde{\bf h}_n^{(q)}\!\right\} ={\rm supp}\!\left\{ \widetilde{\bf h}_n^{(q+1)}\!\right\}
 \!\!   =  \!\cdots \!= {\rm supp}\!\left\{ \widetilde{\bf h}_n^{(q+Q-1)}\!\right\} ,
\end{equation}
 where $Q$ is the number of consecutive time blocks over the duration of which the
 common support of virtual angular domain channels holds. For the example of
 Fig.~\ref{fig:virtual_angle}, assume that the distance between the BS and the user is
 $L_{BU}=250\,{\rm m}$ and $v=36\,{\rm km/h}$. Further assume the case of the mobile
 direction of the user being perpendicular to the direction connecting the BS and the
 user. Then, over the duration of $Q=5$ successive time blocks, the maximum variance in
 the virtual angular domain is around $\theta_{\Delta}=\arctan\left( Q T_c v/L_{BU}\right)
 \approx 0.072^{\circ}$. Such a small variance of the angle spread is negligible,
 compared to the resolution of the virtual angular domain $\varphi_s=1.406^{\circ}$. If
 $v<36\,{\rm km/h}$ and/or the mobile direction of the user is not perpendicular to the
 direction connecting the BS and the user, $Q$ can be larger than 5.

\subsection{Challenges of Channel Estimation and Feedback}\label{challenging}

 Consider the downlink channel estimation in the $q$th time block. To reliably estimate
 the channel of the $n$th subcarrier, the user should jointly utilize the received pilot
 signals over several successive time slots, say, $G$ time slots, for channel estimation.
 Let $y_n^{(q,t)}$ be the received pilot of (\ref{equ:channelmode1}) at the $n$th
 subcarrier in the $t$th time slot, and $y_n^{(q,t)}$ for $1\le t\le G$ can be collected
 together in the vector ${\bf y}_n^{[q,G]}=\big[ y_n^{(q,1)} ~ y_n^{(q,2)} \cdots
 y_n^{(q,G)}\big]^{\rm T} \in \mathbb{C}^{G\times 1}$. Then
\begin{equation}\label{equ:channelmode7} 
 {\bf y}_n^{[q,G]} = {\bf X}^{[q,G]}_n {\bf h}_n^{(q)} + {\bf w}_n^{[q,G]} ,
\end{equation}
 where  ${\bf X}^{[q,G]}_n=\big[ {\bf x}_n^{(q,1)} ~ {\bf x}_n^{(q,2)} \cdots
 {\bf x}_n^{(q,G)}\big]^{\rm T}\in \mathbb{C}^{G\times M}$ with ${\bf x}_n^{(q,t)}\in
 \mathbb{C}^{M\times 1}$ being the transmitted pilot signals in the $t$th time slot,
 and ${\bf w}_n^{[q,G]}=\big[ w_n^{(q,1)} ~ w_n^{(q,2)} \cdots w_n^{(q,G)}\big]^{\rm T}\in
 \mathbb{C}^{G\times 1}$ is the corresponding AWGN vector. To accurately estimate the
 channel from (\ref{equ:channelmode7}), the value of $G$ used in conventional algorithms,
 such as the minimum mean square error (MMSE) algorithm, is heavily dependent on the
 value of $M$. Usually, $G$ can be larger than $J$, which leads
 to the poor channel estimation performance~\cite{orthogonal}. Moreover, to minimize the
 mean square error (MSE) of the channel estimate, ${\bf X}_n^{[q,G]}$ should be a unitary
 matrix scaled by a transmit power factor \cite{orthogonal}. Usually, ${\bf X}^{[q,G]}_n$
 is a diagonal matrix with equal-power diagonal elements. Such a pilot design is
 illustrated in Fig.~\ref{fig:pilot}\,(a), which is called the time-domain orthogonal
 pilot. It should be pointed out that in MIMO-OFDM systems, to estimate the channel
 associated with one transmit antenna, $P$ pilot subcarriers should be used, and usually
 $P=N_g$ is considered since $N_c=N/N_g$ adjacent subcarriers are correlated \cite{orthogonal}.
 Hence the total pilot overhead to estimate the complete MIMO channel is $P_{\rm total}=
 P G=N_g M$. Similarly, LTE-A adopts the time-frequency orthogonal pilots as shown in
 Fig.~\ref{fig:pilot}\,(b), which also needs $P_{\rm total}=P M=N_g M$. These two kinds of
 orthogonal pilots are equivalent, since both of them are based on the framework of Nyquist
 sampling theorem and have the same pilot overhead. Hence we only consider the time-domain
 orthogonal pilot in this paper, and we will propose an efficient non-orthogonal pilot scheme.

 Codebook based channel feedback schemes are widely adopted in small-scale MIMO systems.
 However, to obtain the fine-grain spatial channel structures in massive MIMO systems, the
 codebook size can be huge. Moreover, the storage and encoding of large dimension codebook
 at the user is challenging. To overcome this difficulty, we combine the channel estimation
 and feedback, whereby the CSI acquisition is mainly realized at the BS which has sufficient
 computation capability. By exploiting the spatially common sparsity and temporal
 correlation of massive MIMO channels, the proposed scheme can significantly reduce the
 required overhead and complexity for channel estimation and feedback.

\section{Spatially Common Sparsity Based Adaptive Channel Estimation and Feedback Scheme}\label{S3}

 The procedure of the proposed adaptive channel estimation and feedback scheme is first
 summarized.

\emph{Step 1}: In each time slot, the BS transmits a non-orthogonal pilot to the user,
 and the user directly feeds back the received pilot signal to the BS. Except for 
 \emph{Step 4}, the pilot signal is designed in advance.

\emph{Step 2}: The BS uses the proposed DSAMP algorithm to jointly reconstruct multiple
 sparse virtual angular domain channels of high dimension from the feedback signals of low
 dimension collected in multiple time slots.

\emph{Step 3}: The BS judges the reliability of the estimated sparse channels according
 to a pre-specified criterion. If the given criterion is met, the BS stops transmitting
 pilot in the following time slots, and the acquired CSI at the BS is used for precoding
 and user scheduling in the current time block. Otherwise, the BS goes back to
 \emph{Step 1} until the feedback signals are sufficient for acquiring the reliable CSI.

\emph{Step 4}: Since the BS has acquired the estimated support set $\widehat{\Theta}$
 and the estimated sparsity level $\widehat{S}_a$,
 it can directly use the LS algorithm to estimate the channels in every time block of
 the following $Q-1$ blocks. Here, the time slot overhead required in \emph{Step 1}
 can be reduced to $G=\widehat{S}_a$, and the pilot signals can be adaptively adjusted
 according to $\widehat{\Theta}$ for further improving performance.

 It is seen that the proposed adaptive channel estimation and feedback scheme consists
 of two stages: the CS based adaptive CSI acquisition in the $q$th time block, which
 includes \emph{Step 1} to \emph{Step 3}, and the following closed-loop channel tracking in
 the following $Q-1$ time blocks, which includes \emph{Step 1} and \emph{Step 4}. We now
 detail all the key technical components.

\subsection{Non-Orthogonal Pilot for Downlink Channel Estimation}\label{non-orthogonal}

 The proposed non-orthogonal pilot scheme is illustrated in Fig.~\ref{fig:pilot}\,(c).
 Similar to the time-domain orthogonal pilot scheme, $P$ subcarriers are dedicated to
 pilots in each OFDM symbol. However, the proposed scheme allows the non-orthogonal pilot
 signals associated with different BS antennas to occupy the completely identical
 frequency-domain subcarriers.

 The orthogonal pilot based conventional designs usually require $G\ge M$. By contrast,
 the proposed non-orthogonal pilot for CS based adaptive CSI acquisition is capable of
 providing the efficient compression and reliable recovery of sparse signals. Therefore,
 $G$ is mainly determined by $S_a \ll M$. The non-orthogonal pilot of the first stage is
 designed in advance, which will be discussed in Section~\ref{S4.1}. According to the
 CSI acquired in the first stage, the non-orthogonal pilot used for closed-loop channel
 tracking is adaptively designed to minimize both $G$ and the MSE performance of CSI
 acquisition, which will be illustrated in Section~\ref{S3.4}.

 For the placement of pilot subcarriers, the widely used equi-spaced pilot is considered,
 and the specific reason is given in Section~\ref{S4.3}. For convenience, we denote
 $\Omega_{\xi}=\left\{ \xi_1,\xi_2,\cdots ,\xi_{P}\right\}$ as the index set of the
 pilot subcarriers, where $\xi_p$ for $1 \le p \le P$ denotes the subcarrier index
 dedicated to the $p$th pilot subcarrier. It is worth pointing out that the $p$th pilot
 subcarrier is shared by the pilot signals of the $M$ transmit antennas as illustrated
 in Fig.~\ref{fig:pilot}\,(c).

\subsection{CS Based Adaptive CSI Acquisition Scheme}\label{CSI acquisition}

 In the $q$th time block, as indicated in \emph{Step 1}, the user directly feeds back
 the received pilot signals to the BS without performing downlink channel estimation
 where the feedback channel can be considered as the AWGN channel
 \cite{{multi_VTC},{Rao1},{Rao2},{WCNC12}}. According to (\ref{equ:channelmode7}), at
 the BS, the fed back signal\footnote{The delay of the feedback signal is negligible,
 compared with the relatively long channel coherence time.} (at the $\xi_p$th subcarrier)
 in the $t$th time slot can be expressed as
\begin{equation}\label{equ:channelmode5}  
 r_p^{(q,t)} = \big(\bar{\bf h}_p^{(q)}\big)^{\rm T} {\bf A}_B^{*} {\bf s}_p^{(q,t)} + v_p^{(q,t)} ,
 ~1\le p\le P,
\end{equation}
 where $r_p^{(q,t)}=y_{\xi_p}^{(q,t)}$ is the $p$th feedback pilot signal in the $t$th
 time slot, $\bar{\bf h}_p^{(q)}=\widetilde{\bf h}_{\xi_p}^{(q)}$ is the virtual angular
 domain channel vector associated with the $p$th pilot subcarrier, ${\bf s}_p^{(q,t)}
 ={\bf x}_{\xi_p}^{(q,t)}$ is the pilot signal vector transmitted by the $M$ BS antennas,
 and $v_p^{(q,t)}=w_{\xi_p}^{(q,t)}$ is the effective noise which aggregates both the
 downlink channel's AWGN and feedback channel's AWGN.

 Due to the quasi-static property of the channel during one time block, the feedback signals
 in $G$ successive time slots can be jointly exploited to acquire the downlink CSI
 at the BS, which can be expressed as
\begin{align}\label{equ:joint_process2} 
\!\! {\bf r}_p^{[q,G]} \! =\! {\bf S}_p^{[q,G]} \big( {\bf A}_B^{*} \big)^{\rm T} \bar{\bf h}_p^{(q)}
\!  + \!{\bf v}_p^{[q,G]}\! = \!{\bf \Phi}_p^{[q,G]} \bar{\bf h}_p^{(q)} \!+ \!{\bf v}_p^{[q,G]} ,
\end{align}
 for $1\le p \le P$, where ${\bf r}_p^{[q,G]}=\big[ r_p^{(q,1)} ~ r_p^{(q,2)} \cdots
 r_p^{(q,G)}\big]^{\rm T}$, ${\bf S}_p^{[q,G]}=\big[ {\bf s}_p^{(q,1)} ~ {\bf s}_p^{(q,2)}
 \cdots {\bf s}_p^{(q,G)}\big]^{\rm T}\in \mathbb{C}^{G\times M}$, ${\bf v}_p^{[q,G]}=
 \big[ v_p^{(q,1)} ~ v_p^{(q,2)} ~ \cdots v_p^{(q,G)}\big]^{\rm T}$ and ${\bf \Phi}_p^{[q,G]}
 ={\bf S}_p^{[q,G]} \big( {\bf A}_B^{*} \big)^{\rm T}\in \mathbb{C}^{G\times M}$. The system's
 signal noise ratio (SNR) is defined as $\mbox{SNR}={\rm E}\Big\{\Big\|{\bf \Phi}_p^{[q,G]}
 \bar{\bf h}_p^{(q)}\Big\|_2^2\Big\} \Big/ {\rm E}\Big\{\Big\| {\bf v}_p^{[q,G]}\Big\|_2^2\Big\}$,
 according to (\ref{equ:joint_process2}). By exploiting the spatially common sparsity within
 the system bandwidth, the proposed DSAMP algorithm can reconstruct the sparse angular domain
 channels of multiple pilot subcarriers, as will be detailed in Section~\ref{DSAMP}.

\begin{algorithm}[tbp]
\renewcommand{\algorithmicrequire}{\textbf{Input:}}
\renewcommand\algorithmicensure {\textbf{Output:} }
\caption{CS Based Adaptive CSI Acquisition Scheme}
\label{alg1} 
\begin{algorithmic}[1]
\STATE Determine the initial time slot overhead $G_0$, and set the iteration index $i=0$.
\REPEAT
\STATE Collect ${\bf r}_p^{[q,G_i]}$ and ${\bf \Phi}_p^{[q,G_i]}$ in (\ref{equ:joint_process2})
 for given $G_i$, $1\le p \le P$. {\scriptsize \%~$G_i$ is the required overhead at the
 $i$th iteration.}
\STATE Acquire the channel vectors $\widehat{\bar{\bf h}}_p^{(q)}$ $\forall p$ by using
 the proposed DSAMP algorithm (Algorithm~\ref{alg:Framwork}).
\STATE $G_{i+1} = G_i + 1$; $i = i+1$.
\UNTIL {$\sum\limits_{p=1}^{P}\!\left\| {\bf r}_p^{[q,G_{i-1}]}\!-\!{\bf \Phi}_p^{[q,G_{i-1}]}\widehat{\bar{\bf h}}_p^{(q)}
 \!\right\|_2^2\!\!/\!({P}G_{i-1})\!\!\le\! \varepsilon$}. {\scriptsize \%~If the error is smaller than the
 threshold $\varepsilon$, end repeat; otherwise, continue transmitting the pilot in the next time~slot.}
\STATE $G_0=G_i-1$. ${\kern 5pt}$ {\scriptsize \%~Optional, determine the initial time slot
 overhead for the next CS based adaptive CSI acquisition.}
\end{algorithmic}
\end{algorithm}

 For practical massive MIMO systems, the sparsity level $S_a$ of the channels in the
 virtual angular domain can be time-varying. If $S_a$ is relatively small, a small time
 slot overhead $G$ is sufficient to acquire an accurate CSI estimate, while if $S_a$ is
 relatively large, a large $G$ is required to guarantee the reliable sparse signal recovery.
 We propose the CS based adaptive CSI acquisition as presented in Algorithm~\ref{alg1},
 which can adaptively adjust $G$ to acquire the reliable CSI at the BS efficiently.

 At the first CS based adaptive CSI acquisition, we need to empirically determine the
 initial time slot overhead $G_0$. Since the typical angle spread is about $10^{\circ}$
 \cite{Haifan}, for massive MIMO with $M=128$, the effective sparsity level $S_a = 8$.
 Thus, we may set $G_0=8$ to start. Given $G_i$, the DSAMP algorithm (Algorithm~\ref{alg:Framwork})
 acquires the set of channel vectors $\widehat{\bar{\bf h}}_p^{(q)}$ $\forall p$.
 If $\sum\nolimits_{p=1}^{P}\left\|
 {\bf r}_p^{[q,G_{i}]}-{\bf \Phi}_p^{[q,G_{i}]}{\widehat{\bar{\bf h}}_p}^{(q)} \right\|_2^2/(PG_i)$
 is larger than the predefined threshold $\varepsilon$, the sparse signal recovery is judged
 to be unreliable. Hence, the training time slots increase by one, and a set of the feedback
 pilot signals and transmitted pilot signals, $r_p^{(q,G_{i+1})}$ and ${\bf s}_p^{(q,G_{i+1})}$
 $\forall p$, are collected in the $(G_{i+1})$th time slot, which are combined with the
 previously collected ${\bf r}_p^{[q,G_i]}$ and ${\bf \Phi}_p^{[q,G_i]}$ to enlarge the
 dimension of the measurement vectors sequentially, yielding
\begin{align*}
 & {\bf r}_p^{[q,G_{i+1}]} \!= \!\! \left[ \!\! \begin{array}{c} {\bf r}_p^{[q,G_i]} \\ r_p^{(q,G_{i+1})}
 \end{array} \!\!\! \right] \! \! \mbox{ and }\!
 {\bf \Phi}_p^{[q,G_{i+1}]} = \!\! \left[ \!\!\! \begin{array}{c} {\bf \Phi}_p^{[q,G_i]} \\
 \Big( {\bf s}_p^{(q,G_{i+1})} \Big)^{\rm T} \big( {\bf A}_B^{*} \big)^{\rm T} \end{array} \!\!\! \right] \!
\end{align*}
 to improve the channel estimation. Furthermore, an appropriate initial time slot overhead
 for the next CS based adaptive CSI acquisition is automatically determined at the end.

\subsection{Proposed DSAMP Algorithm for Channel Estimation}\label{DSAMP}

 Given the measurements (\ref{equ:joint_process2}), the CSI can be acquired by solving the
 following optimization
\begin{align}\label{equ:target_func} 
\begin{array}{l}
 \min\limits_{\bar{\bf h}_p^{(q)}, 1\le p\le P} \Big( \sum\nolimits_{p=1}^{P}
 \left\| \bar{\bf h}_p^{(q)} \right\|_0^2 \Big)^{1/2} \\
 {\kern 15pt} {\rm s.t.} {\kern 2pt} {\bf r}_p^{[q,G]} = {\bf{\Phi}}_p^{[q,G]} \bar{\bf h}_p^{(q)},~{\forall p}
  \,~~{\rm and} ~ \left\{ \bar{\bf h}_p^{(q)} \right\}_{p=1}^{P}  \\ {\kern 30pt}
 \mbox{ share the common sparse support set.}
\end{array}
\end{align}
 The DSAMP algorithm, listed in Algorithm~\ref{alg:Framwork}, is used to solve the
 optimization (\ref{equ:target_func}) to simultaneously acquire multiple sparse channel
 vectors at different pilot subcarriers. This algorithm is developed from the
 SAMP algorithm \cite{SAMP}. Specifically, for each stage with the fixed sparsity level
 $\cal{T}$, {\it line 8} selects the $\cal{T}$ potential non-zero elements;  {\it line 9}
 estimates the values associated with the support set $\Omega^{i-1} \cup \Gamma$ using LS;
 {\it line 10} selects $\cal{T}$ most likely supports. {\it Lines~7}$\sim${\it  12} and
 {\it line~21} together aim to find $\cal{T}$ virtual angular domain coordinates which
 contain most of the channel energy. In particular, {\it Lines 7$\sim$12} remove wrong
 indices added in the previous iteration and add the indices associated with the potential
 true indices. If \emph{line 18} is triggered, the algorithm updates $\cal{T}$ and begins
 a new stage. The algorithm is halted when the stopping criteria, indicated in
 \emph{lines~14}$\sim$\emph{17} and \emph{line 23}, are met.

\begin{algorithm}[!tp]
{\small
\renewcommand{\algorithmicrequire}{\textbf{Input:}}
\renewcommand\algorithmicensure {\textbf{Output:} }
\caption{Proposed DSAMP Algorithm}
\label{alg:Framwork} 
\begin{algorithmic}[1]
\REQUIRE
 Noisy feedback signals ${\bf r}_p^{[q,G]}$ and sensing matrices ${\bf{\Phi}}_p^{[q,G]}$
 in (\ref{equ:target_func}), $1\le p \le P$; termination threshold $p_{\rm th}$.
\ENSURE
 Estimated channel vectors in the virtual angular domain at multiple pilot subcarriers
 $\widehat{\bar{\bf h}}_p^{(q)}$, $\forall p$.
\STATE ${\cal{T}} = 1$; $i = 1$; $j =1$. {\scriptsize \%~$\cal{T}$, $i$, $j$ are the sparsity
 level of the current stage, iteration index, and stage index, respectively.}
\STATE ${\bf c}_p\!=\!{\bf t}_p\! ={\bf c}_p^{\rm{last}}\!=\!{\bf{0}}\in \mathbb{C}^{M\times 1}$, $\forall p$.
 {\scriptsize \%~${\bf c}_p$ and ${\bf t}_p$ are intermediate variables, and ${\bf c}_p^{\rm{last}}$
 is the channel estimation of the last stage.}
\STATE $\Omega^0\!\!=\!\!\! \ {\Gamma}\!=\! \widetilde{\Gamma}\!=\!\Omega\!=\!\widetilde{\Omega}\!=\!
 \emptyset$; $l_{\text{min}}\!=\!\widetilde l\!=\!0$. {\scriptsize \%~$\Omega^i$ is the estimated
 support set in the $i$th iteration, $\Gamma$, $\widetilde{\Gamma}$, $\Omega$, and $\widetilde{\Omega}$
 are sets, $l_{\text{min}}$ and $\widetilde l$ denote the support indexes.}
\STATE ${\bf b}^0_p \!=\!{\bf r}_p^{[q,G]}\!\!\in \!\!\mathbb{C}^{G\times 1}\!$, $\forall p$.
 {\scriptsize \%~${\bf b}^i_p$ is the residual of the $i$th iteration.}
\STATE $\sum\nolimits_{p = 1}^P {\left\| {{\bf{b}}_p^{{\rm{last}}}} \right\|_2^2}=+\infty$.
 {\scriptsize \%~${{\bf{b}}_p^{{\rm{last}}}}$ is the residual of the last stage.}
\REPEAT
\STATE ${\bf a}_p=\left( {\bf{\Phi}}_p^{[q,G]} \right)^{*} {\bf b}^{i-1}_p$, $\forall p$.
 {\scriptsize \%~Signal proxy is saved in ${\bf a}_p$.}
\STATE $\Gamma \! = \!\arg \max\limits_{\widetilde{\Gamma}} \left\{ \! \sum\nolimits_{p=1}^{P} \!
 \left\| \left( {\bf a}_p \right)_{\widetilde{\Gamma}} \right\|_2^2 ,\left| \widetilde{\Gamma}
 \right| \! = \! {\cal{T}} \! \right\}$. {\scriptsize \%~Identify support.}
\STATE $\left( {\bf t}_p \right)_{ \Omega^{i-1} \cup \Gamma} \!\!=\!\!
 \left(\left( {\bf{\Phi}}_p^{[q,G]} \right)_{ \Omega^{i-1} \cup \Gamma}\right)^{\dag}
\!\!\! {\bf r}_p^{[q,G]}$, $\forall p$. {\scriptsize \%~LS estimation.}
\STATE $\Omega \! = \!\arg \max\limits_{\widetilde{\Omega}} \left\{ \! \sum\nolimits_{p=1}^{P} \!
 \left\| \left( {\bf t}_p \right)_{\widetilde{\Omega}} \right\|_2^2 ,
 \left| \widetilde{\Omega} \right| \! = \! {\cal T} \! \right\}$. {\scriptsize \%~Prune support.}
\STATE $\left( {\bf c}_p \right)_{\Omega} = \left(\left( {\bf{\Phi}}_p^{[q,G]} \right)_{\Omega}\right)^{\dag}
 {\bf r}_p^{[q,G]}$, $\forall p$. {\scriptsize \%~LS estimation.}
\STATE ${\bf b}_p = {\bf r}_p^{[q,G]} - {\bf{\Phi}}_p^{[q,G]} {\bf c}_p$, $\forall p$. {\scriptsize \%~Compute
 the residual.}
\STATE $l_{\rm min}\! = \! \arg \min\limits_{\widetilde l} \left\{ \sum\nolimits_{p=1}^{P}
 \left\| \left[ {\bf c}_p \right]_{\widetilde l} \right\|_2^2 , {\widetilde l} \in \Omega \right\}$.
 {\scriptsize \%~Find the support of the minimum average energy according to the estimated~${\bf c}_p$.}
\IF {$\sum\nolimits_{p=1}^{P} \left\| \left[ {\bf c}_p \right]_{l_{\min}}
 \right\|_2^2 /{P} < p_{\rm{th}}$}
    \STATE {Quit iteration.} {\scriptsize \%~Support index associated with AWGN may be included~in~$\Omega$.}
\ELSIF{$\sum\nolimits_{p=1}^{P} \left\| {\bf b}_p^{\rm{last}} \right\|_2^2  < \sum\nolimits_{p=1}^{P}
 \left\| {\bf b}_p \right\|_2^2$}
    \STATE {Quit iteration.} {\scriptsize \%~Larger~residual~of~the current~stage~than that of the
 last stage indicates that it is unnecessary to~continue~the iteration.}
\ELSIF{$\sum\nolimits_{p=1}^{P} \left\| {\bf b}_p^{i-1} \right\|_2^2 \le \sum\nolimits_{p=1}^{P}
 \left\| {\bf b}_p \right\|_2^2$}
     \STATE $j \!\!= \!\!j\!+\!1$; ${\cal{T}}\!\! = \!\!j$; ${\bf c}_p^{\rm{last}}
            \!\!=\!\! {\bf c}_p$, ${\bf b}_p^{\rm{last}}\! \!=\!\! {\bf b}_p$, $\forall p$.
 {\scriptsize \%~Begin a new stage. The larger residual of current iteration than that of last
 iteration indicates that iteration at current stage~converges.}
\ELSE
    \STATE ${\Omega}^i = {\Omega}$; ${\bf b}_p^i = {\bf b}_p$, $\forall p$; $i  = i+1$.
 {\scriptsize \%~Continue the iteration at the current stage.}
\ENDIF
\UNTIL{$\sum\nolimits_{p=1}^{P} \left\| \left[ {\bf c}_p \right]_{l_{\min}} \right\|_2^2 / {P}
 < p_{\rm{th}}$}
\STATE $\widehat{\bar{\bf h}}_p^{(q)} = {\bf c}_p^{\rm{last}}$, $\forall p$. {\scriptsize \%~Obtain
 the final channel estimation.}
\end{algorithmic}
}
\end{algorithm}

 Compared to the classical SAMP algorithm \cite{SAMP} which recovers one high-dimensional
 sparse signal from single low-dimensional received signal, the proposed DSAMP algorithm
 can simultaneously recover multiple high-dimensional sparse signals with the common
 support set by jointly processing multiple low-dimensional received signals. In terms of
 termination condition, the SAMP algorithm stops the iteration once the residual is smaller
 than a threshold $p_{\rm th}$. By contrast, the proposed DSAMP algorithm has two halting
 criteria. Specifically, if the energy associated with one virtual angular coordinate in
 the estimated support set is smaller than $p_{\rm th}$ or the residual of the current
 stage is larger than that of the previous stage, the algorithm stops. The proposed halting
 criteria ensure the robust signal recovery performance, which will be discussed in
 Section~\ref{convergence} and confirmed by simulations.

 By using the DSAMP algorithm at the BS, we can acquire the estimates of the virtual
 angular domain channels at the pilot subcarriers, i.e., $\widehat{\bar{\bf h}}_p^{(q)}$
 for $1\le p \le P$. Consequently, the actual channel at the $\xi_p$th subcarrier
 dedicated to the $p$th pilot signal can be acquired according to (\ref{equ:channelmode5}),
 yielding
\begin{align}\label{equ:CSI_estimation} 
 \widehat{\bf h}_{\xi_p}^{(q)} =& \left( {\bf A}_{B}^{*} \right)^{\rm T}
 \widehat{\widetilde{\bf h}}_{\xi_p}^{(q)} = \left( {\bf A}_B^{*} \right)^{\rm T}
 \widehat{\bar{\bf h}}_p^{(q)} .
\end{align}

\subsection{Closed-Loop Channel Tracking with Adaptive Pilot Design}\label{S3.4}

 Since the channels in $Q$ successive time blocks share the spatially common sparsity,
 in the following $Q-1$ time blocks, we can use the simple LS algorithm to estimate
 the channels at the BS from the feedback pilot signals by utilizing the estimated
 support set $\widehat{\Theta}={\rm{supp}}\big(\widehat{\bar{\bf h}}_p^{(q)}\big)$ and
 the sparsity level $\widehat{S}_a=\big| \widehat{\Theta}\big|$ acquired in the $q$th
 time block. Specifically, for the $q_b$th time block, where $q+1\le q_b\le q+Q-1$, the BS
 first transmits a non-orthogonal pilot to the user, and the user again directly feeds
 back the received pilot signal to the BS. At the BS, similar to (\ref{equ:joint_process2}),
 the feedback pilot signal associated with the $p$th pilot subcarrier ${\bf r}_p^{[q_b,G]}$
 can be expressed as
\begin{align}\label{equ:joint_process7}  
  \! \!{\bf r}_p^{[q_b,G]} \! \!= \!\!{\bf S}_p^{[q_b,G]} \!\!\left(  \!{\bf A}_B^{*} \!\right)^{\rm T}\!
 \!\bar{\bf h}_p^{(q_b)}\! \! + \! \!{\bf v}_p^{[q_b,G]}
  \!\!= \! \!{\bf{\Phi}}_p^{[q_b,G]}\bar{\bf h}_p^{(q_b)}  \!\!+ \! \!{\bf v}_p^{[q_b,G]},
\end{align}
 where ${\bf S}_p^{[q_b,G]}$, $\bar{\bf h}_p^{(q_b)}$ and ${\bf v}_p^{[q_b,G]}$ are the
 pilot signal matrix, virtual angular domain channel, and effective noise in the $q_b$th
 time block, respectively. If $\Theta$ and $S_a$ are known, the CSI can be
 acquired using the LS algorithm as
\begin{align}\label{eq13}
 \Big( \widehat{\bar{\bf h}}_p^{(q_b)} \Big)_{\Theta} =& \Big( \Big( {\bf{\Phi}}_p^{[q_b,G]}
 \Big)_{\Theta} \Big)^{\dag} {\bf r}_p^{[q_b,G]},
\end{align}
 which is an unbiased estimator for $\bar{\bf h}_p^{(q_b)}$ that is capable of approaching
 the CRLB \cite{threshold}. The BS can use $\widehat{\Theta}$ and $\widehat{S}_a$, obtained
 in the $q$th time block, to calculate this LS estimate.

 As will be shown in Section~\ref{S4.6}, to acquire the estimate of $\bar{\bf h}_p^{(q_b)}$,
 the required time slot overhead can be reduced to $S_a$. Moreover, the non-orthogonal pilot
 used for channel tracking (for the time blocks of $q+1\le q_b\le q+Q-1$) is very different
 from that used in the $q$th time block. Specifically, to minimize the MSE performance of the
 channel estimation with $G=S_a$, $\big( {\bf{\Phi}}_p^{[q_b,G]} \big)_{\Theta}$ should be a
 unitary matrix scaled by a power factor $\sqrt{G}$. Therefore, for the closed-loop channel
 tracking, we can design the non-orthogonal pilot signal to guarantee this condition, and
 reduce $G$ to $S_a$ while attaining the best MSE performance for the channel estimation.
 Specifically, let $G=S_a$ and ${\bf U}_{S_a}\in \mathbb{C}^{S_a\times S_a}$ be a unitary
 matrix. Then
\begin{align}\label{eq14}
 \Big( {\bf{\Phi}}_p^{[q_b,G]} \Big)_{\Theta} =& \Big( {\bf S}_p^{[q_b,G]} \big(
 {\bf A}_B^{*}\big)^{\rm T} \Big)_{\Theta} = \sqrt{G}{\bf U}_{S_a} ,
\end{align}
 which yields the required non-orthogonal pilot matrix ${\bf S}_p^{[q_b,G]}=\sqrt{G}
 {\bf U}_{S_a}\Big( \Big( \big( {\bf A}_B^{*} \big)^{\rm T} \Big)_{\Theta} \Big)^{\dag}$.

\section{Performance Analysis}\label{S4}

 The performance analysis of the proposed scheme includes: 1)~the non-orthogonal pilot design
 for the CS based adaptive CSI acquisition; 2)~the theoretical limit of the required time slot
 overhead for the CS based adaptive CSI acquisition; 3)~the placement of pilot subcarriers;
 4)~the computational complexity and convergence of the DSAMP algorithm; 5)~the performance
 bound of the proposed scheme; 6)~the required time slot overhead and the performance analysis
 for the adaptive non-orthogonal pilot based closed-loop channel tracking; and 7)~the selection
 of thresholds for Algorithms 1 and 2.

\subsection{Non-Orthogonal Pilot Design for CS based Adaptive CSI Acquisition}\label{S4.1}

 In the $q$th time block, the measurement matrices ${\bf{\Phi}}_p^{[q,G]}$ $\forall p$ in
 (\ref{equ:joint_process2}) are very important for guaranteeing the reliable CSI acquisition
 at the BS. Usually, $G \ll M$. Since ${\bf{\Phi}}_p^{[q,G]}={\bf S}_p^{[q,G]}\big(
 {\bf A}_B^{*}\big)^{\rm T}$ and ${\bf A}_B$ is determined by the geometrical structure of
 the antenna array at the BS, the pilot signals ${\bf S}_p^{[q,G]}$ $\forall p$ transmitted
 by the BS should be designed to guarantee the desired robust channel estimation and feedback.

\subsubsection{Restricted Isometry Property (RIP)}

 In CS theory, RIP is used to evaluate the quality of the measurement matrix, in terms of
 the reliable compression and reconstruction of sparse signals. It is proven in \cite{STR_CS}
 that the measurement matrix with its elements following the independent and identically
 distributed (i.i.d.) complex Gaussian distributions satisfies the RIP and enjoys a satisfying
 performance in compressing and recovering sparse signals.

\subsubsection{Processing Multiple Measurement Vectors in Parallel}

 The optimization problem (\ref{equ:target_func}) is essentially different from the
 single-measurement-vector (SMV) and MMV problems in CS.

 The SMV recovers single high-dimensional sparse signal ${\bf f}$ from its low-dimensional
 measurement signal ${\bf d}$, which may be formulated as ${\bf d}={\bf{\Phi}}{\bf f}$, where
 ${\bf{\Phi}}\in \mathbb{C}^{D\times F}$, $D < F$, and the support set $\Xi={\rm{supp}}\{{\bf f}\}$
 with the sparsity level $|\Xi|=S \ll F$.

 On the other hand, the MMV can simultaneously recover multiple high-dimensional sparse
 signals with the common support set and common measurement matrix from multiple
 low-dimensional measurement signals, which may be formulated as ${\bf D} ={\bf{\Phi}}{\bf F}$,
 with ${\bf D}=\left[ {\bf d}_1 ~ {\bf d}_2 \cdots {\bf d}_L\right]$, ${\bf F}=\left[
 {\bf f}_1 ~ {\bf f}_2 \cdots {\bf f}_L\right]$, ${\rm{supp}}\left\{ {\bf f}_1\right\}
 ={\rm{supp}}\left\{ {\bf f}_2\right\}=\cdots ={\rm{supp}}\left\{ {\bf f}_L\right\}=\Xi$, and
 the sparsity level  $\left| \Xi \right| =S$.

 By contrast, our problem (\ref{equ:target_func}) can jointly reconstruct multiple
 high-dimensional sparse signals with the common support set but having different measurement
 matrices, i.e.,
\begin{align}\label{equ:GMMV} 
 {\bf{d}}_l =& {\bf{\Phi}}_l{\bf f}_l, \, 1\le l \le L,
\end{align}
 where ${\bf{\Phi}}_l\in \mathbb{C}^{D\times F}$, $\forall l$. Therefore, our problem can
 be regarded as the GMMV problem, which includes the SMV and MMV problems as
 its special cases. Specifically, if the multiple measurement matrices are identical, our GMMV
 becomes the conventional MMV, and furthermore if $L=1$, it reduces to the conventional SMV.

 Typically, the MMV has the better recovery performance than the SMV, due to the potential
 diversity from multiple sparse signals \cite{STR_CS}. Intuitively, the recovery performance
 of multiple sparse signals with different measurement matrices, as defined in the GMMV,
 should be better than that with the common measurement matrix as given in the MMV. This is
 because the further potential diversity can benefit from different measurement matrices for
 the GMMV. To prove this, we investigate the uniqueness of the solution to the GMMV problem.
 First, we introduce the concept of `spark' and the $\ell_{0}$-minimization based GMMV problem
 associated with (\ref{equ:GMMV}).

\begin{definition}\cite{STR_CS} \label{D1}
 The smallest number of columns of $\bf{\Phi}$ which are linearly dependent is the spark of
 the given matrix $\bf{\Phi}$, denoted by ${\rm spark}(\bf{\Phi})$.
\end{definition}

\begin{problem} \label{P1}
 $\min\limits_{{\bf f}_l,\forall l} \sum\limits_{l=1}^L \left\| {\bf f}_l \right\|_0^2 ,
~{\rm{s.t.}}~{\bf d}_l\!=\!{\bf{\Phi}}_l {\bf f}_l,~{\rm{supp}}\left\{ {\bf f}_l\right\}\!=\! \Xi,~\forall l$. 
\end{problem}

 For the above $\ell_{0}$-minimization based GMMV problem, the following result can be obtained.

\begin{theorem}\label{T1}
 For ${\bf{\Phi}}_l$, $1\le l\le L$, whose elements obey an i.i.d. continuous distribution,
 there exist full rank matrices ${\bf{\Psi}}_l$ for $2\le l\le L$ satisfying
 $\left( {\bf{\Phi}}_l \right)_{\Xi}={\bf{\Psi}}_l\left( {\bf{\Phi}}_1 \right)_{\Xi}$
 if we select $\left( {\bf{\Phi}}_1\right)_{\Xi}$ as the bridge, where $\Xi$ is the
 common support set. Consequently, ${\bf f}_l$ for $1\le l\le L$  will be the unique solution
 to Problem~\ref{P1} if
\begin{align}\label{equ:GMMV3} 
 2S <\rm{spark}\left( {\bf{\Phi}}_1 \right) - 1 + \rm{rank}\big\{ \widetilde{\bf D}\big\} ,
\end{align}
 where $\widetilde{\bf D}=\left[ {\bf d}_1 ~ {\bf{\Psi}}_2^{-1} {\bf d}_2 \cdots
 {\bf{\Psi}}_L^{-1} {\bf d}_L\right]$.
\end{theorem}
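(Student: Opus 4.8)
The plan is to reduce the GMMV problem to a conventional MMV problem with a single measurement matrix, and then invoke the classical spark-rank uniqueness bound for MMV. First I would establish the existence of the full-rank matrices $\mathbf{\Psi}_l$. Since the entries of each $\mathbf{\Phi}_l$ follow an i.i.d.\ continuous distribution, the $D\times S$ submatrix $\left( \mathbf{\Phi}_l \right)_{\Xi}$ has full column rank $S$ with probability one, because the rank-deficient matrices form a measure-zero set (here I assume $D\ge S$, which is necessary for recovery). Given that both $\left( \mathbf{\Phi}_1 \right)_{\Xi}$ and $\left( \mathbf{\Phi}_l \right)_{\Xi}$ have full column rank, I would extend each of their column sets to a basis of $\mathbb{C}^{D}$ and define $\mathbf{\Psi}_l$ to be the invertible $D\times D$ linear map sending the basis built from $\left( \mathbf{\Phi}_1 \right)_{\Xi}$ to the one built from $\left( \mathbf{\Phi}_l \right)_{\Xi}$. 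By construction this $\mathbf{\Psi}_l$ is full rank and satisfies $\left( \mathbf{\Phi}_l \right)_{\Xi}=\mathbf{\Psi}_l \left( \mathbf{\Phi}_1 \right)_{\Xi}$, which establishes the first claim.

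Next I would carry out the reduction. Because $\mathbf{f}_l$ is supported on $\Xi$, we have $\mathbf{d}_l=\mathbf{\Phi}_l \mathbf{f}_l=\left( \mathbf{\Phi}_l \right)_{\Xi}\left( \mathbf{f}_l \right)_{\Xi}=\mathbf{\Psi}_l \left( \mathbf{\Phi}_1 \right)_{\Xi}\left( \mathbf{f}_l \right)_{\Xi}$, and hence $\mathbf{\Psi}_l^{-1}\mathbf{d}_l=\mathbf{\Phi}_1 \mathbf{f}_l$ (with the convention $\mathbf{\Psi}_1=\mathbf{I}_D$). Stacking these columns shows that $\widetilde{\mathbf{D}}=\mathbf{\Phi}_1 \mathbf{F}$, where $\mathbf{F}=\left[ \mathbf{f}_1 ~ \mathbf{f}_2 \cdots \mathbf{f}_L\right]$ shares the common row support $\Xi$. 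I would emphasize that this identity holds for \emph{any} admissible choice of the $\mathbf{\Psi}_l$, so that $\mathrm{rank}\{\widetilde{\mathbf{D}}\}$ is well defined and independent of the particular basis extensions used. The crucial point is that the GMMV with distinct measurement matrices has thereby been recast as a standard MMV with the single common measurement matrix $\mathbf{\Phi}_1$ and the jointly sparse signal matrix $\mathbf{F}$.

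Finally I would invoke the classical MMV uniqueness result \cite{STR_CS}: for $\widetilde{\mathbf{D}}=\mathbf{\Phi}_1 \mathbf{F}$ with $\mathbf{F}$ having a row support of size $S$, the matrix $\mathbf{F}$ (equivalently, the collection $\{\mathbf{f}_l\}$) is the unique solution to the $\ell_0$-minimization provided $2S<\mathrm{spark}(\mathbf{\Phi}_1)-1+\mathrm{rank}\{\widetilde{\mathbf{D}}\}$. Since the map $\{\mathbf{f}_l\}\mapsto\mathbf{F}$ together with the left-multiplications by $\mathbf{\Psi}_l^{-1}$ is a bijective re-parametrization of the feasible set of Problem~\ref{P1} that preserves the support structure, uniqueness of $\mathbf{F}$ for the MMV is equivalent to uniqueness of $\{\mathbf{f}_l\}$ for the GMMV, yielding exactly the stated bound.

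I expect the main obstacle to be the reduction step rather than the final invocation of the MMV bound. The delicate parts are verifying that the full-rank $\mathbf{\Psi}_l$ genuinely exist (which hinges on the almost-sure full-column-rank property furnished by the continuous distribution, and on $D\ge S$), and confirming that left-multiplication by $\mathbf{\Psi}_l^{-1}$ converts the heterogeneous measurement matrices into the single matrix $\mathbf{\Phi}_1$ while preserving both the common support and the feasible set, so that the cited spark-rank MMV theorem applies verbatim with $\mathbf{\Phi}=\mathbf{\Phi}_1$ and $\mathbf{D}=\widetilde{\mathbf{D}}$.
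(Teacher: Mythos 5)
Your proposal follows essentially the same route as the paper's own proof: establish that $\left( \mathbf{\Phi}_l \right)_{\Xi}$ has full column rank almost surely (the measure-zero argument for rank-deficient matrices), use this to construct the full-rank bridges $\mathbf{\Psi}_l$, left-multiply by $\mathbf{\Psi}_l^{-1}$ to recast the GMMV as the single-matrix MMV $\widetilde{\mathbf{D}}=\mathbf{\Phi}_1\mathbf{F}$, and then invoke the known spark-rank uniqueness bound for MMV. Your write-up is in fact slightly more explicit than the paper's (the basis-extension construction of $\mathbf{\Psi}_l$ and the observation that $\widetilde{\mathbf{D}}$ is independent of the particular choice), but the decomposition and the key cited lemma are identical.
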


\begin{proof}
 Consider (\ref{equ:GMMV}) with ${\bf{\Phi}}_l\in \mathbb{C}^{D\times F}$ for$1\le l\le L$,
 whose elements follow an i.i.d. continuous distribution. The common support set is
 $\Xi ={\rm{supp}}\left\{ {\bf f}_l \right\}$ with the sparsity level $\left| \Xi  \right|
 =S$. This GMMV can be expressed as
\begin{align}\label{aA1} 
 {\bf d}_l =& \left( {\bf{\Phi}}_l \right)_{\Xi} \left( {\bf f}_l \right)_{\Xi}
 = {\bf Z}_l \left( {\bf f}_l \right)_{\Xi} , \, 1\le l\le L.
\end{align}
 The random matrix ${\bf Z}_l=\left( {\bf{\Phi}}_l \right)_{\Xi}\in \mathbb{C}^{D\times S}$
 is a tall matrix, as $D > S$. Clearly, ${\rm{rank}}\left\{ {\bf Z}_l \right\}=S$ with high
 probability, since the measure of the set $\left\{ {\bf Z}_l\in \Omega_{\bf Z}:
 {\rm{rank}}\{{\bf Z}_l\} < S\right\}$ is zero~\cite{cedu}. If we take
 $\left({\bf{\Phi}}_1\right)_{\Xi}$ as the bridge, then there exist the full rank matrices
 ${\bf{\Psi}}_l$, $2\le l\le L$, satisfying $\left({\bf{\Phi}}_l\right)_{\Xi}= {\bf{\Psi}}_l
 \left({\bf{\Phi}}_1\right)_{\Xi}$, and thus
\begin{align} 
 {\bf{\Psi}}_l^{-1}{\bf d}_l =& \left( {\bf{\Phi}}_1 \right)_{\Xi}
 \left( {\bf f}_l \right)_{\Xi} = {{\bf{\Phi }}_1}{{\bf{f}}_l}.
\end{align}
 In this way, the GMMV is converted to the `equivalent' MMV
\begin{align}\label{aA3} 
 \widetilde{\bf D} =& {\bf{\Phi}}_1 {\bf F} ,
\end{align}
 where ${\bf F}=\left[ {\bf f}_1 ~ {\bf f}_2 \cdots {\bf f}_L \right]$. Applying the existing
 result for the MMV given in \cite{MMV-J}, (\ref{equ:GMMV3}) can be directly obtained.
\end{proof}

 From Theorem~\ref{T1}, it is clear that the achievable diversity gain introduced by
 diversifying measurement matrices and sparse vectors is determined by $\rm{rank}\big\{
 \widetilde{\bf D}\big\}$. The larger $\rm{rank}\big\{ \widetilde{\bf D}\big\}$ is, the
 more reliable recovery of sparse signals can be achieved. Hence, compared to the SMV and
 MMV, more reliable recovery performance can be achieved by the proposed GMMV. For the
 special case that multiple sparse signals are identical, the MMV reduces to the SMV
 since ${\rm rank}\left({\bf D}\right)=1$, and there is no diversity gain by introducing
 multiple identical sparse signals. However, the GMMV in this case can still achieve
 diversity gain which comes from diversifying measurement matrices.

\subsubsection{Pilot Design for CS Based Adaptive CSI Acquisition}

 According to the above discussions, a measurement matrix whose elements follow an i.i.d.
 Gaussian distribution satisfies the RIP. Furthermore, diversifying measurement matrices
 can further improve the recovery performance of sparse signals. This  enlightens us to
 appropriately design pilot signals.

 Specifically, each element of pilot signals is given by
\begin{align}\label{equ:pilot_element1} 
 \big[ {\bf S}_p^{[q,G]}\big]_{t,m} =& e^{\textsf{j}\theta_{t,m,p}} , \, 1\le t \le G,
 \, 1\le m \le M ,
\end{align}
 where ${\bf S}_p^{[q,G]}\in\mathbb{C}^{G\times M}$, and each $\theta_{t,m,p}$ has the
 i.i.d. uniform distribution in $[0, ~ 2\pi)$, namely, the i.i.d.
 ${\cal{U}}\left[ 0, ~ 2\pi\right)$. Note that the pilot signals for the CS based adaptive
 CSI acquisition are fixed once they have been designed. Moreover, when designing the pilot
 signals, the worst case of $G=M$ has to be considered. It is readily seen that
 the designed pilot signals (\ref{equ:pilot_element1}) guarantee that the elements of
 ${\bf{\Phi}}_p^{[q,G]}$ of (\ref{equ:target_func}), $\forall p$,  obey the i.i.d. complex
 Gaussian distribution with zero mean and unit variance, i.e., the i.i.d. ${\cal{CN}}(0,1)$.
 Hence, the proposed pilot signal design is `optimal', in terms of the reliable compression
 and recovery of sparse angular domain channels.

\subsection{Time Slot Overhead for CS based Adaptive CSI Acquisition}\label{S4.2}

 According to Theorem~\ref{T1}, for the optimization problem (\ref{equ:target_func}),
 $\widetilde{\bf D}\!=\!{\bf{\Phi}}_1^{[q,G]} {\bf F}$ with
 $\widetilde{\bf D}\!=\!\big[ {\bf r}_1^{[q,G]} ~ {\bf{\Psi}}_2^{-1} {\bf r}_2^{[q,G]}$ $\cdots
 {\bf{\Psi}}_{P}^{-1} {\bf r}_{P}^{[q,G]}\big]$ and ${\bf F}\!\!=\!\!\left[ \bar{\bf h}_1^{(q)} ~
 \bar{\bf h}_2^{(q)} \cdots \bar{\bf h}_{P}^{(q)} \!\right]$. Since $\big| {\rm supp}\big\{\bar{\bf h}_p^{(q)}\!
 \big\}\!\big|\!\!\!=\!\!S_a$, it is clear that
\begin{align}\label{equ:T_overhead1} 
 {\rm{rank}}\big\{ \widetilde{\bf D} \big\} \le {\rm{rank}}\left\{ {\bf F} \right\}\le S_a .
\end{align}
 Moreover, as ${\bf{\Phi}}_1^{[q,G]}\in \mathbb{C}^{G\times M}$,
\begin{align}\label{equ:T_overhead2}  
 {\rm{spark}}\big( {\bf{\Phi}}_1^{[q,G]} \big) \in \left\{2,3,\cdots ,G+1\right\} .
\end{align}
 Substituting (\ref{equ:T_overhead1}) and (\ref{equ:T_overhead2}) into (\ref{equ:GMMV3})
 yields $G\ge S_a+1$. Therefore, the smallest required time slot overhead is $G=S_a+1$.
 As discussed in Section~\ref{CSI acquisition}, an appropriate value of $G$ that
 ensures the reliable CSI acquisition is adaptively determined by Algorithm~\ref{alg1}.
 By increasing the number of measurement vectors $P$, the required time slot overhead
 for reliable channel estimation can be reduced, since more measurement matrices
 and sparse signals can increase ${\rm{rank}}\big\{  \widetilde{\bf D}\big\}$.

\subsection{Frequency-Domain Placement of Pilot Signals}\label{S4.3}

 Like any OFDM channel estimator, the proposed adaptive channel estimation and feedback
 scheme only estimates the channels at pilot subcarriers. Channels at data subcarriers
 are usually acquired based on the estimated channels at pilot subcarriers by using the
 off-the-shelf interpolation algorithms \cite{matlab}.  Clearly the frequency-domain
 placement of pilot signals $\Omega_{\xi}$ significantly influences the achievable
 performance of an interpolation algorithm. Additionally, due to the frequency-domain
 correlation of wireless channels, the channels of adjacent subcarriers exhibit strong
 correlation. Hence, two adjacent subcarriers both dedicated to the pilot may result in
 $\widetilde{\bf D}$ to be rank deficient. We adopt the widely used uniformly-spacing pilot
 placement with the spacing equal to the coherence bandwidth \cite{matlab}, which can
 reduce the correlation between different virtual angular domain channels, so that more
 diversity gain from the multiple sparse channels can be achieved.

\subsection{Performance Analysis of Proposed DSAMP Algorithm}\label{S4.4}

\subsubsection{Complexity}

 The computational complexity of the proposed DSAMP algorithm (Algorithm~\ref{alg:Framwork})
 in each iteration mainly depends on the following operations.

 {\bf Signal proxy} (\emph{line 7}):  The matrix-vector multiplication involved has the
 complexity on the order of $\textsf{O}\big(P M G\big)$.

 {\bf $\ell_{2}$-norm operation} (\emph{lines 8, 10, 13, 14, 16, 18}, and \emph{23}): The
 computational complexity is $\textsf{O}\big(P\big)$.

 {\bf Identifying or Pruning} (\emph{lines 8} and \emph{10}): The cost to locate the
 largest $\cal{T}$ entries of a size-$M$ vector is $\textsf{O}(M)$ \cite{STR_CS}.

 {\bf LS operation} (\emph{lines 9} and \emph{11}): LS solution has the computational
 complexity on the order of $\textsf{O}\big(P(2G{\cal T}^2+{\cal T}^3)\big)$ due to
 the joint recovery of $P$ sparse signals \cite{LS}.

 {\bf Residual computation} (\emph{line 12}): The complexity of computing the residual is
 $\textsf{O}\big(P M G\big)$.

\begin{table}[tp!]
\caption{Computational Complexity to Estimate One Sparse Signal}
\label{TAB1}
\begin{center}
\begin{tabular}{c |c }
\hline
\hline
 {Algorithm} & \multicolumn{1}{c}{Number of complex multiplications in each iteration} \\
 \hline
 OMP   & $2GM + M + 2G{i^2} + {i^3}$ \\ 
 SP    & $2GM + G+M + 2{S_a} + 2(2GS_a^2 + S_a^3)$ \\
 SAMP  & $2GM + {{G}} + M + 3{S_a} + 2(2G{j^2} + {j^3})$ \\
 DSAMP & $2GM + {{G}} + M + 3{S_a} + 2(2G{j^2} + {j^3})$ \\
\hline
\hline
\multicolumn{2}{c}{Note: $i$ denotes the iteration index, and $j$ denotes the stage index.}
\end{tabular}
\end{center}
\end{table}

 Obviously the matrix inversion implemented in Algorithm~\ref{alg:Framwork} for LS
 operation contributes to most of the computational complexity. Table~\ref{TAB1} compares
 the complexity of the proposed DSAMP algorithm, classical OMP algorithm, SP algorithm
 \cite{STR_CS}, and SAMP algorithm, in terms of the number of required complex
 multiplications in each iteration to estimate one sparse signal. It is clear that the four
 algorithms have the same order of computational complexity.

\subsubsection{Stopping Criteria}\label{convergence}

 For the conventional SAMP algorithm, the iterative procedure stops when the residual
 is less than a given threshold. By contrast, the proposed DSAMP algorithm has two
 halting criteria, and meeting either of them will trigger the termination of the
 iterative procedure. Regarding the first halting criterion, when the average energy of
 the wireless channels at a certain virtual angular coordinate is lower than the noise
 floor (\emph{lines 14} and \emph{23}), the iterative procedure stops. When the residual
 of the current stage becomes larger than that of the previous stage (\emph{line 16}),
 the second halting criterion is met and the algorithm also terminates.

 Due to $S_a \ll M$, after coordinates accounting for the majority of the channel energy
 is achieved, the next iteration will include a virtual angular coordinate that is
 dominated by the AWGN. The energy of such a new coordinate is usually lower than the
 noise floor. The first stopping criterion is designed to detect this situation and to
 terminate the algorithm when an appropriate number of virtual angular domain coordinates
 have been tracked.

 As for the second halting criterion, the DSAMP algorithm is similar to the conventional
 SP algorithm in each stage with the fixed sparsity level, which can guarantee the sparse
 signal recovery with the exact sparsity level. The residual of the stage with the exact
 sparsity level is usually smaller than that with the incorrect sparsity level. Therefore,
 the DSAMP algorithm stops at the stage when the smallest residual is reached, which tends
 to be the stage associated with the exact sparsity level of the channels in the virtual
 angular domain.

\subsection{Performance Bound of Channel Estimation}\label{S4.5}

 By omitting $q$, $p$, and $\xi_p$ in (\ref{equ:CSI_estimation}) for simplicity, the
 variance of the channel estimation can be expressed as
\begin{align}\label{eq23} 
 {\rm var}\left\{\widehat{\bf h}\right\} & = {\rm E}\left\{\left\|\widehat{\bf h} - {\bf h}\right\|_2^2
 \right\} = {\rm E}\left\{\left\|\left( {\bf A}_B^{*} \right)^{\rm T} \widehat{\bar{\bf h}} -
 \left( {\bf A}_B^{*} \right)^{\rm T} \bar{\bf h} \right\|_2^2 \right\}  \nonumber \\ &
 = {\rm E}\left\{\left\|\widehat{\bar{\bf h}} - \bar{\bf h} \right\|_2^2 \right\}
 = {\rm{var}}\left\{ \widehat{\bar{\bf h}}\right\} .
\end{align}
 Consider the CRLB for the estimation problem associated with (\ref{equ:joint_process2})
 given the true channel $\bar{\bf h}$ and the support set $\Theta$. Again for notational
 simplicity,  $q$, $G$, and $p$ in  ${\bf r}_p^{[q,G]}$, ${\bf{\Phi}}_p^{[q,G]}$, and
 ${\bf v}_p^{[q,G]}$ are omitted. Since the distribution of ${\bf v}$ is
 ${\cal CN}\left( {\bf 0},\sigma^2{\bf I}_G \right)$, the conditional probability density
 function (PDF) of ${\bf r}$ given $\bar{\bf h}$ is
\begin{align}\label{equ:pdf} 
 p_{{\bf r} | \bar{\bf h}} \big( {\bf r} | \bar{\bf h} \big) =& \frac{1}{\left( \pi \sigma^2 \right)^{G}}
 e^{ - \frac{\left\| {\bf r} - ( {\bf{\Phi}} )_{\Theta}
 ( \bar{\bf h} )_{\Theta} \right\|_2^2}{\sigma^2} } ,
\end{align}
 where $\sigma^2$ is the power of the effective noise. The element at the $s_i$th-row and
 $s_j$th-column of the Fisher information matrix ${\cal I}\left( \big(\bar{\bf h}\big)_{\Theta}
 \right)$ associated with this estimation problem is
\begin{align}\label{equ:fisher} 
 \left[ {\cal{ {I}}} \left( \big(\bar{\bf h}\big)_{\Theta} \right) \right]_{s_i,s_j}
 =& \frac{1}{\sigma^2}\left[ \left(\left( {\bf{\Phi}} \right)_{\Theta} \right)^{*} \left(
 {\bf{\Phi}}\right)_{\Theta} \right]_{s_i,s_j},
\end{align}
 where $1 \le s_i,s_j \le |\Theta|$. Therefore, we have
\begin{align}\label{equ:CRLB} 
\!\! {\rm var}\big\{ \widehat{\bar{\bf h}} \big\}\! \ge & {\rm Tr}\left\{{\big( \bf{\cal{I}}
\! \big( \big(\bar{\bf h}\big)_{\!\Theta}\! \big)\! \big)}^{-1}\!\! \right\} \!\!  = \!
 \sigma^2 {\rm Tr}\left\{\!\big(\! \big(\! \big( \!{\bf{\Phi}} \big)_{\!\Theta}\! \big)^{\!*} \! \big(
 {\bf{\Phi}} \big)_{\!\Theta} \big)^{\!-1\!} \! \right\}.
\end{align}
 Let $\lambda_1,\lambda_2,\cdots ,\lambda_{S_a}$ be the $S_a$ eigenvalues of the matrix
 $\left(\left( {\bf{\Phi}} \right)_{\Theta}\right)^{*}\left({\bf{\Phi}}\right)_{\Theta}
 \in \mathbb{C}^{S_a\times S_a}$. It is clear that
\begin{align}\label{equ:qiyizhi} 
 {\rm Tr}\left\{ \left(\left(\left( {\bf{\Phi}} \right)_{\Theta}\right)^{*}
 \left( {\bf{\Phi}}\right)_{\Theta} \right)^{-1} \right\} =& \sum\nolimits_{i=1}^{S_a}
 \lambda_i^{-1} ,
\end{align}
 which can be calculated after the pilot signals, the geometrical structure of the BS
 antenna array, and the support set of the channel vectors in the virtual angular domain
 are given.

 However, the support set $\Theta$ is `random' since the channel vectors in practice are
 random and the elements of the measurement matrix ${\bf{\Phi }}$ obey the i.i.d.
 ${\cal{CN}}(0,1)$. Thus we should consider the `expectation' of the CRLB defined by
\begin{align}\label{eq28} 
 {\rm E}\left\{ {\rm var} \left\{ \widehat{\bar{\bf h}} \right\} \right\} \ge {\rm E}
 \left\{ \sigma^2\sum\nolimits_{i=1}^{S_a} \lambda_i^{-1} \right\} .
\end{align}
 For the matrix $\left(\left( {\bf{\Phi}} \right)_{\Theta}\right)^{*}\left( {\bf{\Phi}}
 \right)_{\Theta}$ with the elements of ${\bf{\Phi}}$ obeying the i.i.d. ${\cal{CN}}(0,1)$,
 its eigenvalues $\left\{ \lambda_i \right\}_{i=1}^{S_a}$ obey the following joint
 distribution \cite{qiyizhi}
\begin{align}\label{eq29}
\!\!\!\!  p_{\tilde \lambda}\!\left( \lambda_1,\!\lambda_2,\!\cdots \!,\!\lambda_{S_a} \! \right)
 \!= \! e^{-\!\!\sum\limits_{i=1}^{S_a} \! \lambda_i}\!\! \prod\limits_{i=1}^{S_a}\!\!\left(\!\!
 \frac{\lambda_i^{G-S_a}}{\left( S_a - i \right)! \, i !}\!\! \prod\limits_{j > i}^{S_a}
 \!\!\left( \lambda_j \!- \!\lambda_i \!\right)^2\!\!\right)\!.
\end{align}
 Consequently, the expectation of the CRLB can be written as
\begin{align}\label{equ:jifenjie} 
\!\!\! {\rm E}\! \left\{ \!{\rm var} \!\left\{ \widehat{\bar{\bf h}}\! \right\}\! \right\}\! \ge \!\!\!
 \int\limits_0^{\infty} \!\! \cdots \!\!\int\limits_0^{\infty}\!\! \sigma^2\!\!\sum\limits_{i=1}^{S_a}\!\! \lambda_i^{-1}
 p_{\tilde \lambda}\!\!\left( \lambda_1, \!\cdots \! ,\! \lambda_{S_a} \!\right)\! d{\lambda_1} \!\cdots \! d{\lambda_{S_a}}.
\end{align}
 Since the computation of (\ref{equ:jifenjie}) can be highly complex, in practice we adopt the
 performance of the oracle LS estimator as the performance bound in the simulation study.

\subsection{Adaptive Pilot Design and Required Time Slot Overhead for Closed-Loop Channel Tracking}\label{S4.6}

 For the simplicity of analysis, the true support set
 $\Theta$ and the sparsity level $S_a$ of the virtual angular domain channels are assumed to
 have been acquired by the CS based adaptive CSI acquisition. Clearly, if $S_a$ is known, the
 smallest time slot overhead for CSI acquisition can be reduced to $G=S_a$.

\begin{figure}[!tp]
\begin{center}
\hspace*{2mm}\includegraphics[width=0.95\columnwidth,keepaspectratio]{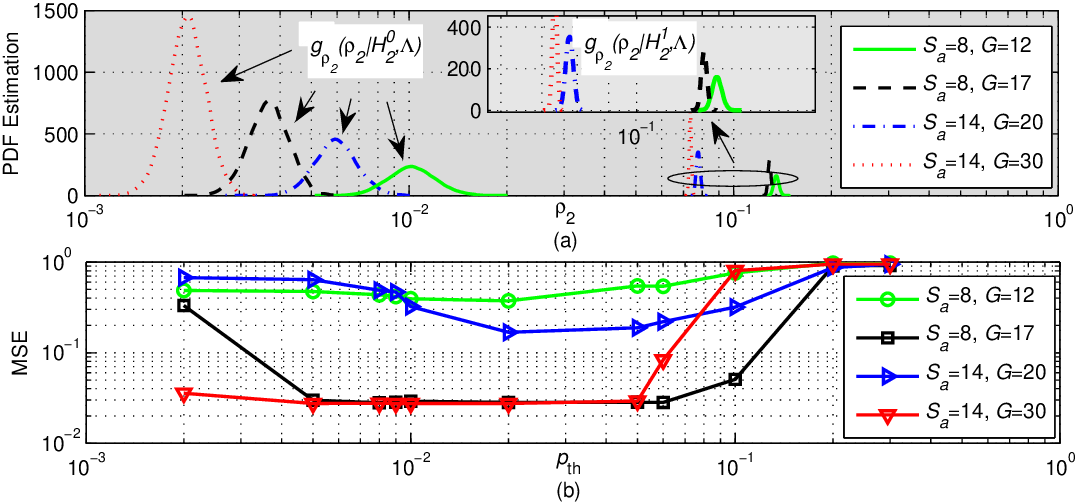}
\end{center}
\caption{The selection of threshold $p_{\rm th}$ in Algorithm~2 given
 $\mbox{SNR}=15$\,dB and $P=64$: (a)~estimated PDFs of $g_{\rho_2}\left( \rho_2 |
 {\cal H}_2^0,\Lambda \right)$ and $g_{\rho_2}\left( \rho_2 | {\cal H}_2^1,\Lambda \right)$;
 and (b)~MSE performance of the DSAMP algorithm as the function of $p_{\rm th}$.}
\label{fig:A2} 
\end{figure}

 With the known $\Theta$, by exploiting the arithmetic-harmonic means inequality \cite{LS},
 (\ref{equ:qiyizhi}) can be further expressed as
\begin{align}\label{equ:crlb_dai1} 
\!\!\! {\rm{Tr}}\!\left\{\! \left( \left( \left( {\bf{\Phi}} \right)_{\Theta}\right)^{\!*}\!\left( {\bf{\Phi}}
 \right)_{\Theta}\! \right)^{\!-1} \!\right\} \!\ge\! \frac{S_a^2}{\sum\limits_{i=1}^{S_a}\!\! \lambda_i} \!=\!
 \frac{S_a^2}{{\rm{Tr}}\left\{\! \left(\! \left( {\bf{\Phi}} \right)_{\Theta}\right)^{*}\!
 \left( {\bf{\Phi}}\!\right)_{\Theta}\!\right\} },
\end{align}
 where the equality holds if and only if $\lambda_1=\lambda_2= \cdots =\lambda_{S_a}$. This
 indicates that $\left(\left({\bf{\Phi}}\right)_{\Theta}\right)^*\left({\bf{\Phi}}\right)_{\Theta}$
 should be a diagonal matrix with identical diagonal elements to approach the lower bound. In
 particular, for $\big({\bf{\Phi}}\big)_{\Theta}$ with ${\rm{Tr}}\left\{ \left( \left(
 {\bf{\Phi}} \right)_{\Theta}\right)^{*}\left( {\bf{\Phi}}\right)_{\Theta} \right\}=S_a G$,
\begin{align}\label{equ:crlb_dai2}  
 {\rm{var}} \left\{ \widehat{\bar{\bf h}} \right\} \ge {\sigma^2 S_a}/{G} ,
\end{align}
 and the lower bound of (\ref{equ:crlb_dai2}) is attained if $\big({\bf{\Phi}}\big)_{\Theta}$
 is a unitary matrix scaled by the factor $\sqrt{G}$. This has inspired us to design the pilot
 signal matrix as ${\bf S}=\sqrt{G} {\bf U}_{S_a} \left(\left(\left( {\bf A}_B^{*}\right)^{\rm T}
 \right)_{\Theta}\right)^{\dag}$, where ${\bf U}_{S_a}\in \mathbb{C}^{S_a \times S_a}$ is a
 unitary matrix. With this non-orthogonal pilot matrix, the lower bound of (\ref{equ:crlb_dai2})
 is attained, i.e., ${\rm{var}}\left\{ \widehat{\bar{\bf h}} \right\} =\sigma^2 S_a/G=\sigma^2$.

\subsection{Selection of Thresholds for Algorithms 1 and 2}\label{S4.7}

\subsubsection{$p_{\rm{th}}$ in Algorithm~\ref{alg:Framwork}}\label{S4.7.1}

 Consider the case $\Lambda$ that for the stage of ${\cal T}=S_a + 1$ in
 Algorithm~\ref{alg:Framwork}, the final estimated support set, denoted as $\widehat{\Omega}_{\cal T}$,
 is the proper superset of the true support set $\Theta$, i.e., $\widehat{\Omega}_{\cal T}
 \supsetneqq \Theta$. This case implies that $\widehat{\Omega}_{\cal T}$ includes the
 support index associated with the noise. Define the test statistic as
 $\rho_2=\sum\nolimits_{p=1}^{P}\left\| \left[ {\bf c}_p \right]_{\widetilde{l}}
 \right\|_2^2 \big/ P$ with $\widetilde{l}\in \widehat{\Omega}_{\cal T}$ (\emph{lines 13}
 and \emph{14} in Algorithm~\ref{alg:Framwork}). Two complete hypotheses for the case
 $\Lambda$ are defined as: ${\cal H}_2^0$, indicating $\widetilde{l}\in \Theta$, and
 ${\cal H}_2^1$, indicating $\widetilde{l}\notin \Theta$. Furthermore, denote the PDFs
 of $\rho_2$ under ${\cal H}_2^0$ and ${\cal H}_2^1$ as $g_{\rho_2}\left( \rho_2 |
 {\cal H}_2^0,\Lambda \right)$ and $g_{\rho_2}\left( \rho_2 | {\cal H}_2^1,\Lambda \right)$,
 respectively. By using the \emph{ksdensity} function of MATLAB, we can obtain the estimated
 PDFs according to Monte-Carlo simulations, since the closed-form expressions are difficult
 to obtain.

\begin{figure}[!tp]
\begin{center}
\includegraphics[width=0.95\columnwidth, keepaspectratio]{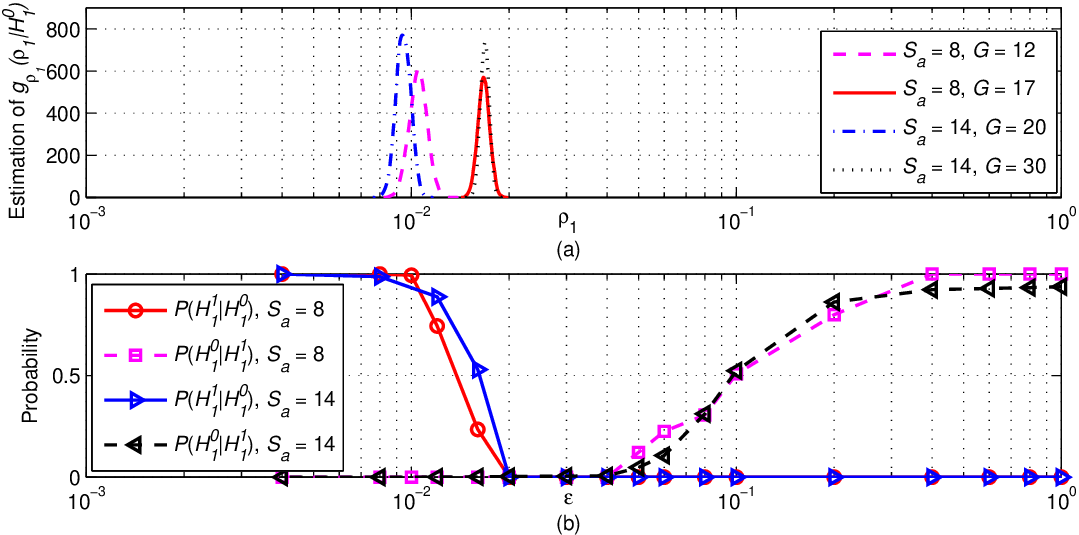}
\end{center}
\caption{The selection of threshold $\varepsilon$ in Algorithm~\ref{alg1}
 given $\mbox{SNR}=15$\,dB, $G_0=10$, and $P=64$: (a)~estimated PDF of
 $g_{\rho_1}\left( \rho _1 | {\cal{H}}_1^0 \right)$; and (b)~$\Pr \left( {\cal H}_1^1|{\cal H}_1^0 \right)$
 and $\Pr \left( {\cal H}_1^0|{\cal H}_1^1 \right)$ as the functions of $\varepsilon$.}
\label{fig:A1} 
\end{figure}

 Fig.~\ref{fig:A2}\,(a) depicts the estimated PDFs of $g_{\rho_2}\left( \rho_2 |
 {\cal H}_2^0,\Lambda \right)$ and $g_{\rho_2}\left( \rho_2 | {\cal H}_2^1,\Lambda \right)$
 with typical values of $S_a$ and $G$, given $\mbox{SNR}=15$\,dB and $P=64$.
 Fig.~\ref{fig:A2}\,(b) provides the MSE performance of Algorithm~\ref{alg:Framwork} as the
 function of $p_{\rm{th}}$, which indicates that $p_{\rm{th}}=0.02$ achieves good MSE
 performance given typical values of $S_a$ and $G$. Following a similar procedure,
 suitable values of $p_{\rm{th}}$ for different SNRs can be obtained.

\subsubsection{$\varepsilon$ in Algorithm~\ref{alg1}}\label{S4.7.2}

 Consider the test statistic $\rho_1=\sum\nolimits_{p=1}^P\Big\| {\bf r}_p^{[q,G]}-{\bf \Phi}_p^{[q,G]}
 \widehat{\bar{\bf h}}_p^{(q)} \Big\|_2^2\big/ (GP)$ (\emph{line 6} in Algorithm~\ref{alg1}
 with iteration index $i$ omitted) and the two complete hypotheses ${\cal{H}}_1^0$
 and ${\cal{H}}_1^1$, where ${\cal{H}}_1^0$ indicates that the support set of
 $\{\widehat{\bar{\bf h}}_p^{(q)}\}_{p=1}^{P}$ is correct, and ${\cal{H}}_1^1$ is
 complementary to ${\cal{H}}_1^0$. Under ${\cal{H}}_1^0$, $\rho_1=\sum\nolimits_{p=1}^P
 \Big\| \big( {\bf{I}}-{\bf{\Phi}}\big)_{\Theta} \big( \big({\bf{\Phi}}\big)_{\Theta} \big)^{\dag}
 \big){\bf{v}}_p^{[q,{G}]} \Big\|_2^2 \big/ (G P)$. However, under ${\cal{H}}_1^1$, the
 closed-form expression of ${\rho _1}$ is difficult to derive. Similar to Fig.~\ref{fig:A2},
 Fig.~\ref{fig:A1}\,(a) provides the estimates of the PDF $g_{\rho_1}\left( \rho_1 |{\cal{H}}_1^0 \right)$
 with typical values of $S_a$ and $G$, given $\mbox{SNR}=15$\,dB, $G_0=10$ and $P=64$. According
 to Neyman-Pearson criterion \cite{threshold}, an appropriate threshold $\varepsilon$ should
 minimize the probability of false alarm given the probability of miss. Fig.~\ref{fig:A1}\,(b)
 depicts the simulated probability of false alarm $\Pr \left( {{\cal{H}}_1^1|{\cal{H}}_1^0} \right)$
 and the miss probability $\Pr \left( {{\cal{H}}_1^0|{\cal{H}}_1^1} \right)$ of Algorithm~\ref{alg1}
 as the functions of $\varepsilon$, where $p_{\rm{th}}=0.02$ is used in the simulation.
 The results of Fig.~\ref{fig:A1}\,(b) indicate that $\varepsilon =0.03$ minimizes both
 $\Pr \left( {\cal H}_1^1|{\cal H}_1^0 \right)$ and $\Pr \left( {\cal H}_1^0|{\cal H}_1^1 \right)$
 given typical values of $S_a$ and $G$. Similarly, appropriate values of $\varepsilon$ for
 different SNRs can be obtained.

\section{Simulation Results}\label{S5}

\begin{figure}[!tp]
\begin{center}
\includegraphics[width=1.1\columnwidth, keepaspectratio]{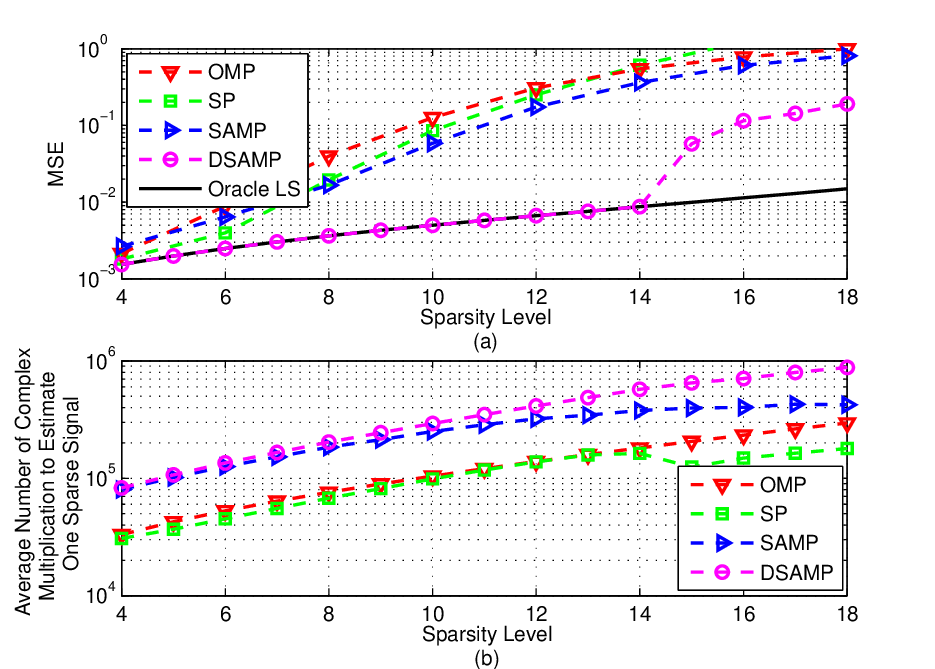}
\end{center}
\caption{Performance comparison of different CS algorithms as functions of
 the sparsity level $S_a$ given $G=30$, $P=64$ and $\mbox{SNR}=20$\,dB: (a)~MSE
 performance, and (b)~computational complexity.}
\label{fig:MSE_algorithm} 
\end{figure}

 Massive MIMO system with the ULA of $M=128$ antennas and $d ={\lambda}/{2}$ was considered.
 The spatial angle spread varied from $10^\circ$ to $20^\circ$ \cite{{Haifan},{JSDM}}, and
 thus the effective sparsity level in the virtual angular domain $S_a$ was in the range of 8
 to 14. In the simulations, $f_c=2$\,GHz, $B_s=10$\,MHz, $N=2048$, and $v= 36$\,km/h, while
 the channels in the virtual angular domain exhibited the spatially common sparsity over
 $Q=5$ time blocks. The length of the guard interval was 64, which indicates that the system
 can combat the maximum delay spread of $6.4\,\mu$s \cite{channel_model_for4g}, and thus we
 adopted $P=64$ \cite{orthogonal}. The threshold parameters, $\varepsilon$ in Algorithm~\ref{alg1}
 and $p_{\rm{th}}$ in Algorithm~\ref{alg:Framwork}, were selected according to Section~\ref{S4.7}.
 Specifically, we set $p_{\rm{th}}$ to 0.06, 0.02, 0.01, 0.008, and 0.005, while $\varepsilon$
 to 0.08, 0.03, 0.0.09, 0.003, and 0.001, respectively, at the SNR of 10\,dB, 15\,dB, 20\,dB,
 25\,dB and $\ge 30$\,dB.  The oracle LS estimator and the CRLB were used as the benchmarks
 for the CS based adaptive CSI acquisition and the following closed-loop channel tracking,
 respectively. The time slot overhead $G$ employed in the closed-loop channel tracking scheme
 was set to the estimated sparsity level obtained by the CS based adaptive CSI acquisition stage.
 The joint OMP (J-OMP) based CSI acquisition scheme \cite{Rao1} was also adopted for comparison.

 Fig.~\ref{fig:MSE_algorithm} compares the MSE performance and complexity of four CS
 algorithms under various sparsity levels $S_a$. In the simulations, $P\!=\!64$ sparse
 signals with the length of $M\!=\!128$ had the common sparsity, the measurement dimension
 was $G\!=\!30$, and $\mbox{SNR}\!=\!20$\,dB, while $P$ measurement matrices were mutually
 independent with elements obeying the i.i.d ${\cal{CN}}(0,1)$. Note that the conventional
 OMP and SP algorithms require $S_a$ as the priori information. Fig.~\ref{fig:MSE_algorithm}\,(a)
 shows that the DSAMP algorithm achieves the best MSE performance and it approaches the
 oracle LS estimator for $S_a\le 14$\footnote{The DSAMP algorithm suffers
 from certain performance loss, compared to the oracle LS estimator in the noisy scenario
 with $G\le 2S_a$. For $G=2S_a$,  the case of $\Omega^{i-1}\cap\Gamma =\emptyset$
 ({\it line 9}) and $\Omega =\Omega^{i-1}$ ({\it line 13}) may repeatedly appear due to
 noise, resulting in the failure of the backtracking function of {\it lines 7$\sim$12}.
 For $G < 2S_a$, the case of $\Omega^{i-1} \cap \Gamma =\emptyset$ can lead to a poor LS
 estimation ({\it line 9}) due to $|\Omega|>G$. Also see \cite{SAMP}.}. This is because
 the DSAMP algorithm jointly estimates $P$ sparse signals by exploiting the common sparsity.
 Moreover, Fig.~\ref{fig:MSE_algorithm} (b) shows that the complexity of the DSAMP algorithm
 is slightly higher than those of its counterparts, but all the four CS algorithms have the
 same order of complexity.

\begin{figure}[!tp]
\begin{center}
\includegraphics[width=0.95\columnwidth, keepaspectratio]{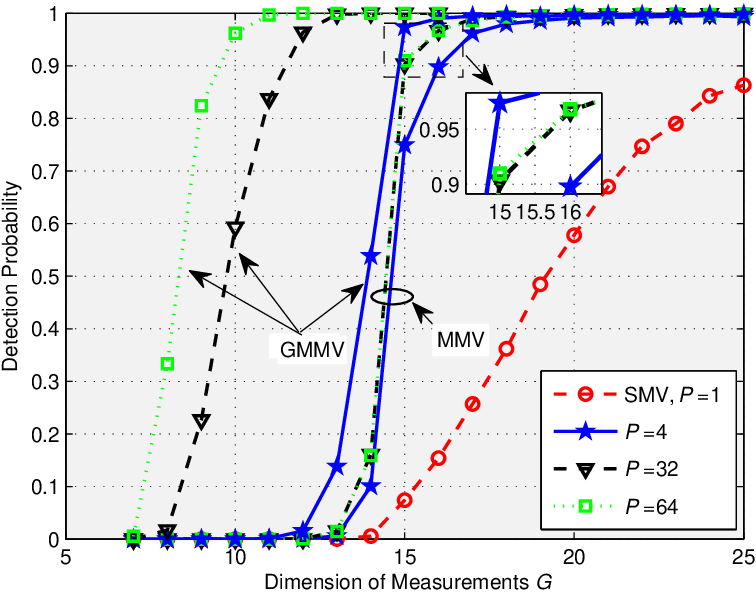}
\end{center}
\caption{Comparison of the sparse signal detection probabilities of the SMV, MMV and
 the proposed GMMV as functions of $G$.}
\label{fig:detect_prob} 
\end{figure}

 We defined the sparse signal detection probability as the probability of correctly
 acquiring the support set of sparse signal (channel). Fig.~\ref{fig:detect_prob} compares
 the detection probabilities as functions of the measurement dimension $G$ achieved by the
 SMV, MMV, and GMMV in noiseless scenario. In the simulation, the length of multiple sparse
 signals was $M=128$ with the common sparsity level $S_a=8$, and the DSAMP algorithm was
 employed to recover sparse signals. In particular, the SMV recovers single sparse signal
 from single measurement matrix, and the MMV jointly recovers $P$ sparse signals with the
 multiple identical measurement matrices, where the elements of the measurement matrix obey
 the i.i.d. ${\cal{CN}}(0,1)$. By contrast, the GMMV recovers $P$ sparse signals with
 mutually independent measurement matrices in parallel, where the elements of the measurement
 matrices also obey the i.i.d. ${\cal{CN}}(0,1)$. From Fig.~\ref{fig:detect_prob}, it is
 clear that the joint processing of multiple sparse signals with the common support set and
 diversifying measurement matrices significantly enhance the performance of sparse signal
 recovery. For example, to obtain the detection probability of one with $P=64$, the MMV
 requires $G=17$, but the proposed GMMV only needs $G = 11$, which indicates a
 reduction of approximately 35\% in the required time slot overhead. Even the GMMV with
 $P=4$ outperforms the MMV with $P=64$.

\begin{figure}[!tp]
\begin{center}
\includegraphics[width=1\columnwidth, keepaspectratio]{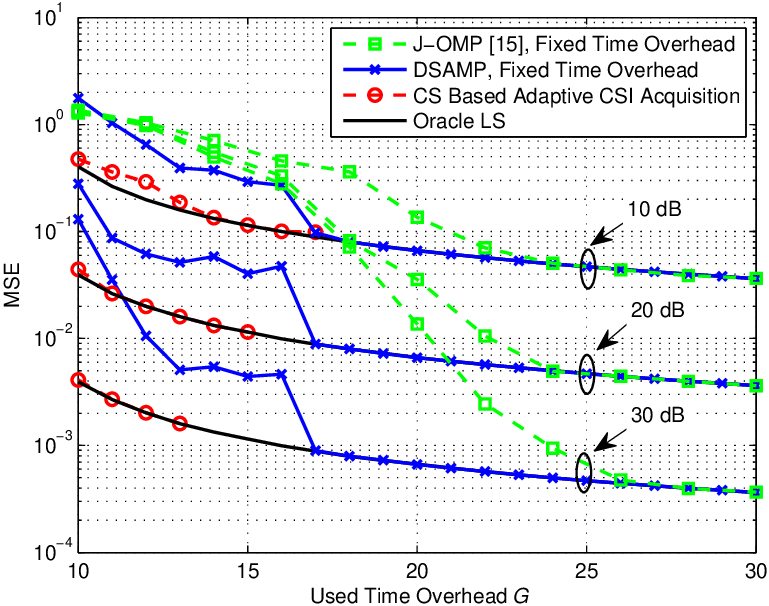}
\end{center}
\caption{MSE performance of different channel estimation and feedback schemes as functions of the time
 overhead $G$ and SNR.}
\label{fig:mse_vs_T} 
\end{figure}

\begin{figure*}[!hb]
\begin{center}
\includegraphics[width=18cm, keepaspectratio]{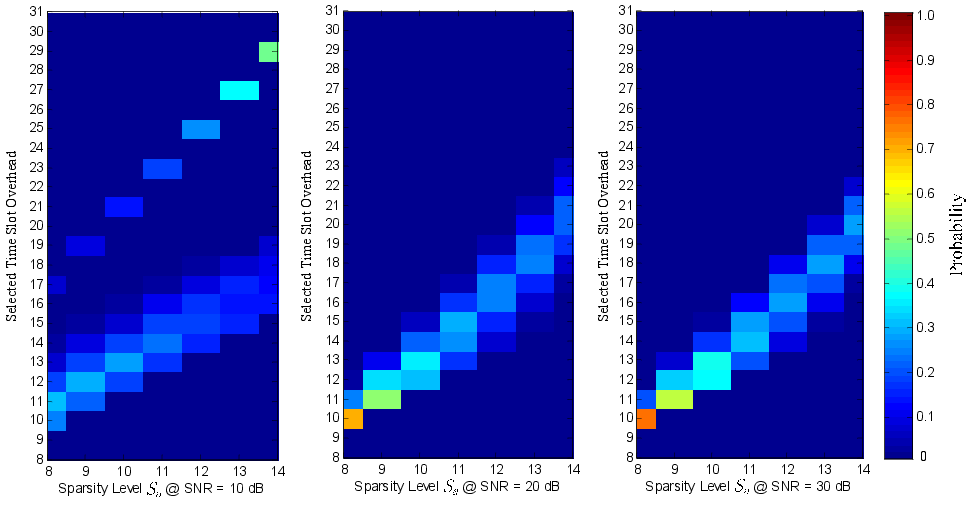}
\end{center}
\caption{Distributions of adaptively selected time slot overhead by the CS based adaptive
 CSI acquisition scheme for different sparsity levels and SNRs.}
 \label{fig:dis_pro} 
\end{figure*}
 Fig.~\ref{fig:mse_vs_T} compares the MSE performance of the J-OMP scheme
 \cite{Rao1} with fixed $G$, the DSAMP algorithm with fixed $G$, and the CS based adaptive
 CSI acquisition scheme (Algorithm~\ref{alg1}), where $S_a=8$ was considered. The oracle LS
 estimator with the known support set of the sparse channel vectors was adopted as the
 performance bound. From Fig.~\ref{fig:mse_vs_T}, it can be seen that the J-OMP based CSI
 acquisition scheme performs poorly. By contrast, the proposed DSAMP algorithm is capable
 of approaching the oracle LS performance bound when $G\!>\!2S_a$. However, there still exists
 a significant performance gap between the DSAMP algorithm and the oracle LS estimator for
 $G\!\le\! 2S_a$. This is because the unreliable sparse signal recovery may occur when the
 time slot overhead $G$ is insufficient, which degrades the MSE performance. Fortunately,
 the proposed CS based adaptive CSI acquisition scheme can adaptively adjust $G$ to acquire
 the robust channel estimation. Observe from Fig.~\ref{fig:mse_vs_T} that the proposed CS
 based adaptive CSI acquisition scheme approaches the oracle LS performance bound even for
 $G\!\le\! 2S_a$. Note that for Algorithm~\ref{alg1}, we only plot the MSE associated with
 $G\!\le\! 2S_a$, because Algorithm~\ref{alg1} actually determines an appropriate $G\le 2S_a$
 adaptively.

 Fig.~\ref{fig:dis_pro} depicts the distributions of the adaptively determined time slot
 overhead $G$ by the CS based adaptive CSI acquisition, given different sparsity level $S_a$
 and SNRs. In Algorithm~\ref{alg1}, $G_0$ was set to 8. The results of Fig.~\ref{fig:dis_pro}
 show that the proposed scheme can adaptively determine an appropriate $G$ according to $S_a$.
 As pointed out in Section~\ref{challenging}, to reliably acquire
 CSI, the required $G$ in conventional schemes can be as large as $G=M=128$. By exploiting the
 spatially common sparsity and temporal correlation of massive MIMO channels, the proposed scheme
 can effectively estimate the channels associated with hundreds of antennas at the BS with a
 dramatically reduced time slot overhead. Considering $S_a=8$ at $\mbox{SNR}=30$\,dB for
 example, our scheme only uses a time slot overhead of $G\approx 10$ to acquire CSI at the BS,
 which represents a reduction in the required $G$  by about 92\%, compared to conventional
 schemes.

\begin{figure*}[!t]
\begin{center}
 \includegraphics[width=18cm, keepaspectratio]{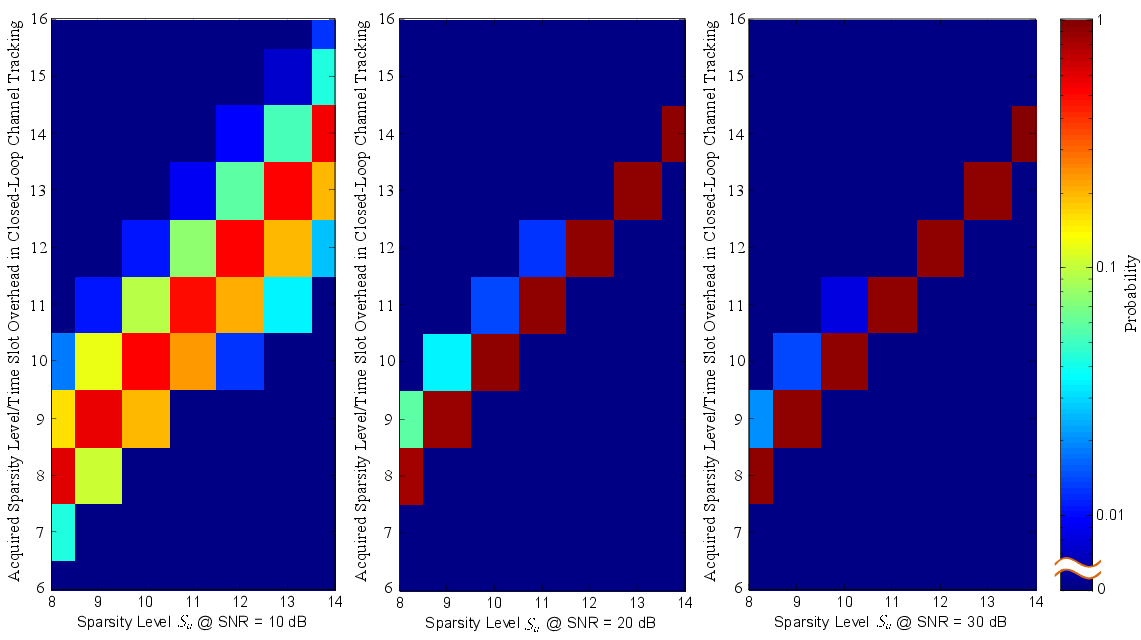}
\end{center}
 \caption{Distributions of the acquired sparsity level $\widehat{S}_a$ by the CS
 based adaptive CSI acquisition scheme (which is then used as the time slot overhead for the
 proposed closed-loop channel tracking scheme) for different sparsity levels and SNRs.}
\label{fig:dis_pro_loop} 
\end{figure*}

 Fig.~\ref{fig:dis_pro_loop} plots the distributions of the acquired sparsity level
 $\widehat{S}_a$ by the proposed CS based adaptive CSI acquisition scheme, under the same
 settings of Fig.~\ref{fig:dis_pro}. The results of Fig.~\ref{fig:dis_pro_loop} show that
 the proposed scheme can accurately acquire the true sparsity level $S_a$. Note that the
 acquired $\widehat{S}_a$ may be smaller than $S_a$ at low SNR. This is because some virtual
 angular domain coordinates whose channel energy is lower than the noise floor may be
 discarded by the DSAMP algorithm. Because we set the time slot overhead $G$ to $\widehat{S}_a$
 in the closed-loop channel tracking, Fig.~\ref{fig:dis_pro_loop} also provides the probability
 distributions of the time slot overhead used in the closed-loop channel tracking stage. As
 expected, the required time slot overhead in this stage is smaller than the time slot
 overhead actually used in the CS based adaptive CSI acquisition stage, which is confirmed by
 comparing Fig.~\ref{fig:dis_pro_loop} to~Fig.~\ref{fig:dis_pro}.

\begin{figure}[!b]
\begin{center}
\includegraphics[width=1\columnwidth, keepaspectratio]{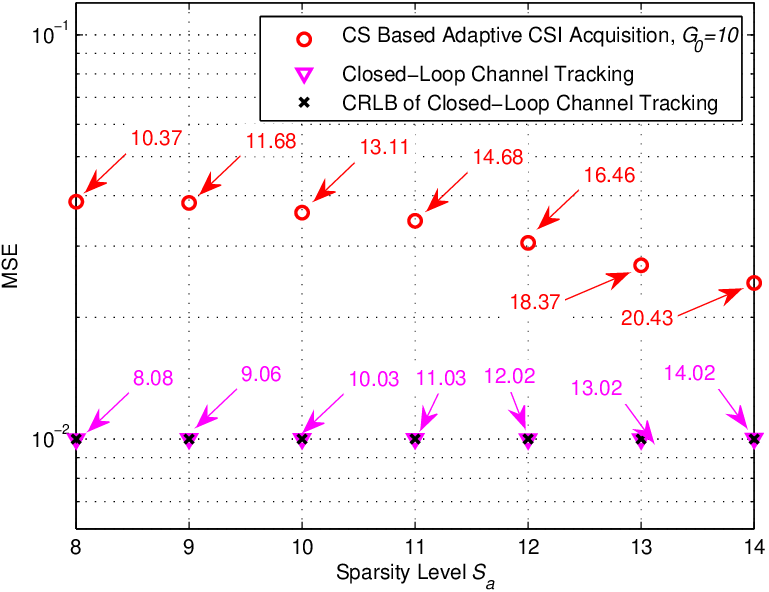}
\end{center}
\caption{MSE performance comparison of the CS based adaptive CSI acquisition stage and
 closed-loop channel tracking stage for different sparsity levels $S_a$ at $\mbox{SNR}=20$\,dB,
 where the required  $\bar{G}$ for each case is indicated.}
\label{fig:mse_vs_snr2} 
\end{figure}

 Fig.~\ref{fig:mse_vs_snr2} compares the MSE and required average time slot overhead
 $\bar{G}$ of the CS based adaptive CSI acquisition with those of the closed-loop
 channel tracking for different $S_a$ at $\mbox{SNR}\!=\!20$\,dB.  The initial overhead
 $G_0\!=\!10$ was set for the CS based adaptive CSI acquisition. It is clear that
 benefiting from the accurately estimated sparsity level information provided by the CS
 based adaptive CSI acquisition, the closed-loop channel tracking enjoys the better MSE
 performance with a smaller required time slot overhead. For $S_a=14$, the required
 $\bar{G}$ by the CS based adaptive CSI acquisition and following closed-loop channel
 tracking are 20.43 and 14.02, respectively. Since the acquired CSI by the CS based
 adaptive CSI acquisition is utilized to adaptively adjust the pilot signal for enhancing
 performance, the closed-loop channel tracking approaches the CRLB, as can be seen in
 Fig.~\ref{fig:mse_vs_snr2}. Also note that for the CS based adaptive CSI acquisition, the
 ratio $\bar{G}/S_a$  increases slightly as $S_a$ increases. Hence the MSE performance of
 the CS based adaptive CSI acquisition improves slightly as the true sparsity level $S_a$
 increases.

\begin{figure}[!tp]
\begin{center}
\includegraphics[width=1\columnwidth, keepaspectratio]{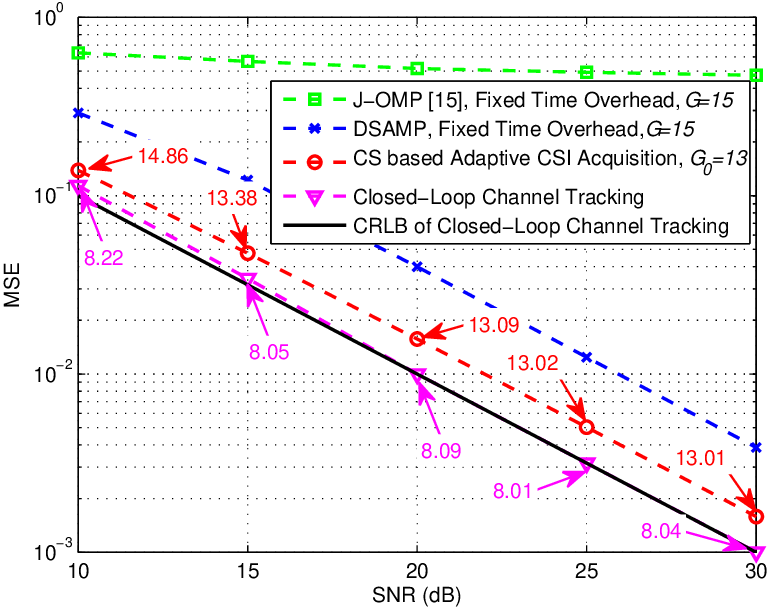}
\end{center}
\caption{MSE performance comparison of different channel estimation and feedback schemes for various SNRs
 and true sparsity level $S_a=8$, where the required average time slot overhead $\bar{G}$ for
 each case is indicated.}
\label{fig:mse_vs_snr} 
\end{figure}

 Fig.~\ref{fig:mse_vs_snr} provides the MSE performance comparison for different channel
 estimation and feedback schemes, given $S_a=8$ and various SNRs. Both the J-OMP based CSI
 acquisition scheme \cite{Rao1} and the DSAMP algorithm used the fixed $G\!=\!15$. For the
 CS based adaptive CSI acquisition scheme, $G_0\!=\!13$ was considered. The required average
 time slot overheads for the proposed scheme  are also marked in Fig.~\ref{fig:mse_vs_snr}.
 Again, it is clear that the proposed CS based adaptive CSI acquisition stage
 (Algorithm~\ref{alg1}), which uses the DSAMP algorithm with fixed $G$ to adaptively
 determine an appropriate time slot overhead, outperforms the J-OMP based CSI acquisition
 scheme and DSAMP algorithm with a reduced time slot overhead requirement. By utilizing the
 accurately estimated channel sparsity information provided by the CS based adaptive CSI
 acquisition scheme, the closed-loop channel tracking stage can adaptively adjust the pilot
 signal to approach the CRLB with a further reduced time slot overhead. Specifically, the
 proposed scheme can reliably acquire the CSI of this massive MIMO system, approaching the
 CRLB, with an average time slot overhead $\bar{G}<2S_a$.

\begin{figure}[!tp]
\begin{center}
\includegraphics[width=1\columnwidth, keepaspectratio]{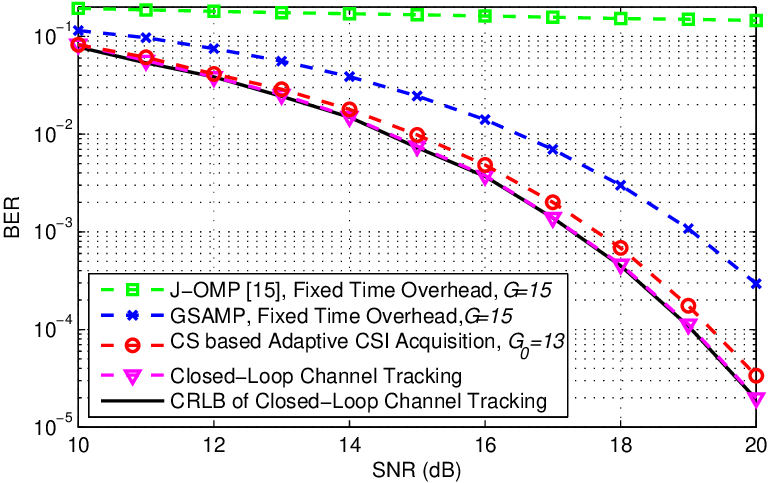}
\end{center}
\caption{Downlink BER performance with ZF precoding, where the CSI at the BS is
 acquired by different channel estimation and feedback schemes.}
\label{fig:ber_vs_snr} 
\end{figure}

 Fig.~\ref{fig:ber_vs_snr} compares the downlink bit error rate (BER) performance
 with zero-forcing (ZF) precoding, where the precoding is based on the estimated CSI
 corresponding to Fig.~\ref{fig:mse_vs_snr} under the same setup. In the simulations, the BS
 simultaneously served 16 users using 16-quadrature amplitude modulation signaling, and the
 effective noise in CSI acquisition was only introduced in the downlink channel. It can be
 observed that the proposed channel estimation and feedback scheme outperforms its counterparts,
 and its BER performance is capable of approaching that of the CRLB.

\section{Conclusions}\label{S6}

 An adaptive channel estimation and feedback scheme has been proposed for FDD massive
 MIMO, which achieves robust and accurate CSI acquisition at the BS, while  dramatically
 reducing the overhead for channel estimation and feedback. The proposed scheme consists
 of two stages, the CS based adaptive CSI acquisition and the following closed-loop
 channel tracking. By exploiting the spatially common sparsity of massive MIMO channels
 within the system bandwidth, the CS based adaptive CSI acquisition can acquire
 the high-dimensional CSI from a small number of non-orthogonal pilots. The closed-loop
 channel tracking, which exploits the spatially common sparsity of massive MIMO channels
 over multiple consecutive time blocks, can effectively utilize the acquired CSI  in the
 first stage to approach the CRLB. We have generalized the MMV to the GMMV
 in CS theory and provided the CRLB of the proposed scheme, which enlightens us to design
 the non-orthogonal pilot for different stages of the proposed scheme.  Simulation results
 have confirmed that our scheme can reliably acquire the CSI of massive MIMO systems,
 specifically, approaching the performance bound  with an adaptively determined time slot
 overhead.


\end{document}